\def\compactify{\itemsep=0pt \topsep=0pt \partopsep=0pt \parsep=0pt}
\let\latexusecounter=\usecounter
\newcommand{\Domain}{\Omega}
\newcommand{\Support}{\mathcal{X}}
\newcommand{\Set}{S_{\mathcal{X}}}
\newcommand{\tl}{\textlatin}
\newcommand{\argmax}{\text{argmax}}
\def\poly{\mathrm{poly}}
\def\eps{\varepsilon}
\def\Prob{\mathbb{P}}
\def\Alg{\textsc{Alg}}
\def\Cond{\textsc{Cond}}
\def\EP{\textsc{EP}}
\def\SE{\textsc{SE}}
\def\DV{\textsc{DV}}
\def\MAX{\textsc{Max}}
\def\ARGMAX{\textsc{ArgMax}}
\def\SUM{\textsc{Sum}}
\def\WCOND{\textsc{WCond}}
\def\DES{\textsc{DES}}
\def\SPACE{\textsc{SPACE}}
\newcommand{\reals}{\mathbb{R}}
\newcommand{\nats}{\mathbb{N}}
\def\abs#1{\left|#1\right|}
\newtheorem{theorem}{Theorem}
\newtheorem{corollary}{Corollary}
\newtheorem{lemma}{Lemma}
\definecolor{vergreen}{RGB}{0,85,2}
\definecolor{myvergreen}{RGB}{0,140,3}
\definecolor{provorange}{RGB}{85,34,0}
\definecolor{inputblue}{RGB}{5,13,111}
\definecolor{noapred}{RGB}{116,3,3}
\definecolor{classesblue}{RGB}{9,49,146}
\definecolor{secinhead}{RGB}{249,196,95}
\definecolor{lgray}{gray}{0.8}
\begin{document}

\sethlcolor{lgray}

\begin{titlepage}
\title{Faster Sublinear Algorithms using Conditional Sampling}
\date{}
\author{
  Themistoklis Gouleakis \\ EECS, MIT \\ \href{mailto:tgoule@mit.edu}{tgoule@mit.edu}	
  \and Christos Tzamos \\ EECS, MIT \\ \href{mailto:tzamos@mit.edu}{tzamos@mit.edu}	
	\and Manolis Zampetakis \\ EECS, MIT \\ \href{mailto:mzampet@mit.edu}{mzampet@mit.edu}	
	}
\clearpage
\maketitle
\thispagestyle{empty}
\begin{abstract}
A conditional sampling oracle for a probability distribution $D$ returns samples from the conditional distribution of $D$ restricted to a specified subset of the domain. A recent line of work \cite{chakraborty2013power, canonne2014testing} has shown that having access to such a conditional sampling oracle requires only polylogarithmic or even constant number of samples to solve distribution testing problems like identity and uniformity. This significantly improves over the standard sampling model where polynomially many samples are necessary.

Inspired by these results, we introduce a computational model based on conditional sampling to develop sublinear algorithms with exponentially faster runtimes compared to standard sublinear algorithms. We focus on geometric optimization problems over points in high dimensional Euclidean space. Access to these points is provided via a conditional sampling oracle that takes as input a succinct representation of a subset of the domain and outputs a uniformly random point in that subset.
We study two well studied problems: k-means clustering and estimating the weight of the minimum spanning tree. In contrast to prior algorithms for the classic model, our algorithms have time, space and sample complexity that is polynomial in the dimension and polylogarithmic in the number of points.

Finally, we comment on the applicability of the model and compare with existing ones like streaming, parallel and distributed computational models.
\end{abstract}
\end{titlepage}
  
  \makeatletter{}\section{Introduction}
\label{s:intro}

Consider a scenario where you are given a dataset of input points $\Support$, from some domain $\Domain$, stored in a random access memory and you want to estimate the number of distinct elements of this (multi-)set. One obvious way 
to do so is to iterate over all elements and use a hash table to find duplicates. Although simple, this solution becomes unattractive if the input is huge and it is too expensive to even parse it. In such cases, one natural goal is 
to get a good estimate of this number instead of computing it exactly. One way to do that is to pick some random numbers from $\Support$ and estimate, based on those, the total number of distinct elements in the set. This is 
equivalent to getting samples from a probability distribution where the probability of each element is proportional to the number of times it appears in $\Support$. In the context of probability distributions, this is a well 
understood problem, called \textit{support estimation}, and tight bounds are known for its sampling complexity. More specifically, in \cite{valiantV11}, it is shown that the number of samples needed is $\Theta(n / \log n)$ which,
although sublinear, still has a huge dependence on the input size $n = |\Support|$.

In several situations, more flexible access to the dataset might be possible, e.g. when data are stored in a database, which can significantly reduce the number of queries needed to perform support estimation or other tasks. One 
recent model, called \textit{conditional sampling}, introduced by~\cite{chakraborty2013power, canonne2014testing} for distribution testing, describes such a possibility. In that model, there is an underlying distribution $D$, and a 
conditional sampling oracle takes as input a subset $S$ of the domain and produces a sample from $D$ conditioned to a set $S$. \cite{chakraborty2013power} and \cite{canonne2014testing} study several problems in distribution testing 
obtaining surprising results: Using \textit{conditional queries} it is possible to avoid polynomial lower bounds that exist for the sampling complexity in the standard distribution testing framework and get testers with only 
polylogarithmic or even constant query requirements. In follow up work, Acharya, Cannone and Kamath \cite{acharyaCK15} consider the support estimation problem we described above and prove that the support estimation problem can be 
solved using only $O(\poly \log \log n)$ conditional samples.
This is a doubly exponentially better guarantee compared to the optimal classical algorithm which requires $\Theta(n / \log n)$ samples \cite{valiantV11}. 

Inspired by the power of these results, we introduce a computational model based on conditional sampling where the dataset is provided as a distribution and algorithms have access to a conditional sampling oracle that returns 
datapoints at random from a specified set. More precisely, an algorithm is given access to an oracle $\Cond(C)$ that takes as input a function $C: \Domain \rightarrow \{0,1\}$ and returns a tuple $(i, x_i)$  with $C(x_i) = 1$ with 
$i$ chosen uniformly at random from the subset $\{ j \in [n] \mid C(x_j) = 1 \}$. If no such tuple exists the oracle returns $\bot$. The function $C$ is represented as a binary circuit. We assume that queries to the conditional 
sampling oracle $\Cond$ take time linear in the circuit size. Equivalently, we could assume constant time, as we are already paying linear cost in the size of the circuit to construct it.

Most works in conditional sampling, measure the performance of algorithms only by their query complexity. The work of \cite{canonne2014testing} considers the description complexity of the query set $S$ by examining restricted conditional queries that either specify pairs of points or intervals. However, in many cases, such simple queries might not be sufficient to obtain efficient algorithms. We use the circuit size of a set’s description as a measure of simplicity to allow for richer queries which is naturally counted towards the runtime of our algorithms.

Except from its theoretical interest, it is practically useful to consider algorithms that perform well in the conditional sampling model. This is because efficient algorithms for the conditional sampling model can directly be 
implemented in a number of different computational models that arise when we have to deal with huge amount of data. In particular, let's assume that we have an algorithm $A$ that solves a problem $P$ using $q$ conditional queries 
where the description of the sets used has size $s$ and the additional computational time needed is $r$. 

\paragraph{Parallel Computing: } We notice that the computation of one conditional sample can be very easily parallelized because it suffices to assign to each processor a part of the 
    input and send to each of them the description of the circuit. Each processor can compute which of its points satisfy the circuit and pick one at random among them. Then, we can select as output the sample of one processor chosen
    at random. The probability of choosing one processor in this phase is proportional to the number of points in the input assigned to this 
    processor that belong to the conditioning set. This way we can implement in just a few steps a conditional sampling query. If the input is divided evenly among $m$ processors the load on each of them is $n/m$. Combining the 
    answers can be done in $\log m$ steps and therefore, the running time of $A$ in the parallel computation model is $O(q \cdot s \cdot (n / m + \log m) + r)$ which gives a non-trivial parallelization of the problem $P$. Except from
    the running time, one  important issue that can decrease the performance of a parallel algorithm is the communication that is needed among the processors as described in the work of Afrati et. al. \cite{afratiBSHSU12}. This 
    communication cost can be bounded by the size $s$ of the circuit at each round plus the communication for the partition of the input that happens only once in the beginning.
\paragraph{Streaming Algorithms: } The implementation of a conditional query in the streaming model where we want to minimize the number of passes of the input is pretty straightforward. With one pass of the input we
    can select one point uniformly at random from the points that belong to the conditioning set using standard streaming techniques. The space that we need for each of these passes is just 
    $s$ and we need $q$ passes of the input. 
\paragraph{Distributed Computing: } The implementation of a conditional query in the distributed computational model can follow the same ideas as in the parallel computational model.
 
The surprising result in all the above cases is that one has to deal with transferring appropriately the conditional sampling model to the wanted computational model and then we can get high performance
algorithms once $q$, $s$ and $r$ are small. In this work we design algorithms that achieve all these quantities to be only polylogarithmic in the size of the input, which leads to very efficient algorithms in all the above models. 

\makeatletter{}\subsection{Previous Work on Sublinear Algorithms}

 We consider two very well studied combinatorial problems: $k$-means clustering and minimum spanning tree. For these problems we know the following about the sublinear algorithms in the classical setting.

\subsubsection{$k$-means Clustering}
  
  Sublinear algorithms for $k$-median and $k$-means clustering first studied by Indyk \cite{indyk99}. In this work, given a set of $n$ points from a metric space, an algorithm is given that computes a set of size $O(k)$ that 
approximates the cost of the optimal clustering within a constant factor and runs in time $O(n k)$. Notice that the algorithm is sublinear as the input contains all the pairwise distances between the points which have total size 
$O(n^2)$. 

In followup work, Mettu and Plaxton \cite{mettuP04} gave a randomized constant approximation algorithm for the $k$-median problem with running time 
$O(n ( k + \log n) )$ subject to the constraint $R \le 2^{O(n / \log (n / k))}$, where $R$ denotes the ratio between the maximum and the minimum distance between any pair of distinct points in the metric 
space. Also Meyerson et. al. \cite{meyersonOP04} presented a sublinear algorithm for the $k$-median problem with running time $O((k^2 / \eps) \log (k / \eps))$ under the assumption that each cluster has
size $\Omega(n \eps / k)$.

  In a different line of work Mishra, Oblinger and Pitt \cite{mishraOP01} and later Czumaj and Sohler \cite{czumajS07} assume that the diameter $\Delta$ of the set of points is bounded and known. The 
running time of the algorithm by Mishra et. al. \cite{mishraOP01} is only logarithmic in the input size $n$ but is polynomial in $\Delta$. Their algorithm is very simple since it just picks uniformly at 
random a subset of points and solves the clustering problem on that subset. Following similar ideas, Czumaj and Sohler \cite{czumajS07} gave a tighter analysis of the same algorithm proving that the running time depends only on the diameter $\Delta$ and is independent of $n$. The dependence on $\Delta$ is still polynomial in this work. The guarantee in both these works is a constant multiplicative approximation algorithm with an additional additive error term.

\subsubsection{Minimum Spanning Tree in Euclidean metric space}

There is a large body of work on sublinear algorithms for the minimum spanning tree. In \cite{indyk99}, given $n$ points in a metric space $\Domain$ an algorithm is provided that  outputs a spanning tree in time $\tilde{O}(n / \delta)$ achieving a $(1/2 - \delta)$-approximation to the optimum.
When considering only the task of estimating of the weight of the optimal spanning tree, Czumaj and Sohler 
\cite{CS04} provided an algorithm that gets an $(1 + \eps)$-approximation. The running time of this algorithm is $\tilde{O}(n \cdot \poly (1/\eps))$.

  To achieve better guarantees several assumptions could be made. One first assumption is that we are given a graph that has bounded average degree $deg$ and the weights of the edges are also bounded by
$W$. For this case, the work of Chazelle et. al. \cite{CRT05} provides a sublinear
algorithm with running time $\tilde{O}(deg \cdot W \cdot 1/\eps^2)$ that returns the weight of the minimum spanning tree with relative error $\eps$. Although the algorithm completely gets rid of the
dependence in the number of points $n$ it depends polynomially in the maximum weight $W$. Also in very dense graphs $deg$ is polynomial in $n$ and therefore we also have a polynomial dependence on $n$.

  Finally, another assumption that we could make is that the points belong to the $d$-dimensional Euclidean space. For this case, the work of Czumaj et. al. \cite{CEFMNRS05} provide an $(1 + \eps)$-approximation algorithm that requires time $\tilde{O}(\sqrt{n} \cdot (1/\eps)^d)$. Note that in this case the size of the input is $O(n)$ and not $O(n^2)$ since given the coordinates of the $n$ 
points we can calculate any distance. Therefore, the algorithms described before that get running time $O(n)$ are not sublinear anymore. Although Czumaj et. al. \cite{CEFMNRS05} manage to achieve a 
sublinear algorithm in this case they cannot escape from the polynomial dependence on $n$. Additionally, their algorithm has exponential dependence on the dimension of the Euclidean space.

\subsection{Our Contribution}

  The main result of our work is that on the conditional sampling framework we can get exponentially faster sublinear algorithms compared to the sublinear algorithms in the classical framework.

  We first provide some basic building blocks -- useful primitives for the design of algorithms. These building blocks are:
\vspace{-2mm}
\begin{enumerate}[label=\alph*.]
  \setlength\itemsep{-1mm}
  \item Compute the size of a set given its description, Section \ref{s:primSE}.  \item Compute the maximum of the weights of the points of a set given the description of the set and the description of the weights, Section \ref{s:primMAX}.  \item Compute the sum of the weights of the points of a set given the description of the set and the description of the weights, Section \ref{s:primSUM}.  \item Get a \textit{weighted conditional sample} from the input set of points given the description of the weights, Section \ref{s:primWCOND}.  \item Get an \textit{$\ell_0$-sample} given the description of labels to the points Section \ref{s:primDES}.\end{enumerate}

  For all these primitives, we give algorithms that run in time polylogarithmic in the domain size and the value range of the weights. We achieve this by querying the conditional sampling oracle with random subsets produced by 
appropriately chosen distribution on the domain. Intuitively, this helps to estimate the density of the input points on different parts of the domain. One important issue of conditioning on random sets in that the description 
complexity of the set can be almost linear on the domain size. To overcome this difficulty we replace the random sets with pseudorandom ones based on Nisan's pseudorandom generator \cite{Nisan90}. The implementation of these 
primitives is of independent interest and especially the fourth one since it shows that the \textit{weighted conditional sample}, which is similar to sampling models that have been used in the literature \cite{acharyaCK15b}, can be 
simulated by the conditional sampling model with only a polylogarithmic overhead in the query complexity and the running time. 

  After describing and analyzing these basic primitives, we use them to design fast sublinear algorithms for the $k$-means clustering and the minimum spanning tree.

\subsubsection{$k$-means Clustering}

  Departing from the works of Mishra, Oblinger and Pitt \cite{mishraOP01} and Czumaj and Sohler \cite{czumajS07} where the algorithms start by choosing a uniform random subset, we start by choosing a 
random subset based on \textit{weighted sampling} instead of uniform. In the classical computational model we need at least linear time to get one conditional sample and thus it is not possible
to use the power of weighted sampling to get sublinear time algorithms for the $k$-means problem. But when we are working in the conditional sampling model, then the adaptive sampling can be implemented in
polylogarithmic time and queries. This enables us to use all the known literature about the ways to get efficient algorithms using conditional sampling \cite{arthurV07}.
  Quantitatively the advantage from the use of the weighted sampling is that we can get sublinear algorithms with running times $O(\poly( \log \Delta, \log n))$ where $\Delta$ is the diameter of the 
metric space and $n$ the number of points on the input. This is exponentially better from Indyk \cite{indyk99} in terms of $n$ and exponentially better from Czumaj and Sohler \cite{czumajS07} in terms of 
$\Delta$. This shows the huge advantage that one can get from the ability to use or implement conditional sampling queries. We develop and analyze these ideas in detail in Section \ref{s:kmeans}.

\subsubsection{Minimum Spanning Tree in Euclidean metric space}

  Based on the series of works on sublinear algorithms for minimum spanning trees, we develop algorithms that exploit the power of conditional sampling and achieve polylogarithmic time with respect to the number of input points $n$ 
and only polynomial with respect to the dimension of the Euclidean space. This is very surprising since in the classical model it seems to exist a polynomial barrier that we cannot escape from. Compared to the algorithm by Czumaj et.
al. \cite{CEFMNRS05}, we get running time $O(\poly(d, \log n, 1/\eps))$ which is exponential improvement with respect to both the parameters $n$ and $d$.
  
  We present our algorithm at Section \ref{s:mst}. From a technical point of view, we use a gridding technique similar to \cite{CEFMNRS05} but prove that using a random grid can significantly reduce the runtime of the algorithm as we
avoid tricky configurations that can happen in worst case.

  \makeatletter{}\section{Model and Preliminaries}
\label{s:prelim}

\paragraph{Notation} For $m \in \nats$ we denote the set $\{1,\cdots,m\}$ by $[m]$. We use $\tilde O(N)$ to denote $O(N \log^{O(1)} N)$ algorithms.

  Given a function $f$ that takes values over the rationals we use $C_f$ to denote the binary circuit that takes as input the binary representation of the input $x$ of $f$ and outputs the binary 
representation of the output $f(x)$. If the input or the output are rational numbers then the representation is the pair $(numerator, denominator)$.

Suppose we are given an input $\vec x = (x_1, x_2 , \cdots , x_n)$ of length $n$, where every $x_i$ belongs in some set $\Domain$. In this work, we will fix $\Domain = [\mathcal{D}]^d$ for some $\mathcal{D} = n^{O(1)}$ to be the discretized $d$-dimensional Euclidean space.
Our goal is to compute the value of a symmetric function $f : \Domain^n \rightarrow \mathbb{R}_+$ in input $\vec x \in \Domain^n$. We assume that all $x_i$ are distinct and define $\Support \subseteq \Domain$ as the set $\Support = \{x_i : i \in [n]\}$. Since we consider symmetric functions $f$, it is convenient to extend the definition of $f$ to sets $f(\Support) = f(x)$.

A randomized algorithm that estimates the value $f(x)$ is called \textit{sublinear} if and only if its running time is $o(n)$. We are interested in \textit{additive} or \textit{multiplicative approximation}. A sublinear algorithm \textsc{Alg} for computing $f$ has $(\eps, \delta)$-additive 
approximation if and only if
\[ \Prob \left[ | \Alg(x) - f(x) | \ge \eps \right] \le \delta \]
and has $(\eps, \delta)$-multiplicative approximation if and only if 
\[ \Prob \left[ (1 - \eps) f(x) \le \Alg(x) \le (1 + \eps) f(x) \right] \le \delta. \]
We usually refer to $(\eps, \delta)$-approximation and is clear from the context if we refer to the additive or the
multiplicative one.

\subsection{Conditional Sampling as Computational Model}
\label{s:prelCond}

The standard sublinear model assumes that the input is stored in a random access memory that has no further structure. Since $f$ is symmetric in the input points, the only reasonable operation is to \textit{uniformly sample} points from the input. Equivalently, the input can be provided by an oracle \textsc{Sub} that returns a tuple $(i, x_i)$ where $i$ is chosen uniformly at random from the set $[n] = \{1, \dots, n\}$.

When the input has additional structure (i.e. points stored in a database), more complex queries can be performed.
The \emph{conditional sampling model} allows such queries of small description complexity to be performed.
In particular, the algorithm is given access to an oracle $\Cond(C)$ that takes as input a function $C: \Domain \rightarrow \{0,1\}$ and returns a tuple $(i, x_i)$ with $C(x_i) = 1$ with $i$ chosen uniformly at random from the subset $\{ j \in [n] \mid C(x_j) = 1 \}$. If no such tuple exists the oracle returns $\bot$. The function $C$ is represented as a sublinear sized binary circuit. All the results presented in this paper use polylogarithmic circuit sizes.

We assume that queries to the conditional sampling oracle $\Cond$ take time linear in the circuit size. Equivalently, we could assume constant time, as we are already paying linear cost in the size of the circuit to construct it.

\subsection{$k$-means Clustering}

  Let $d( \cdot, \cdot )$ be distance metric function $d : \Domain \times \Domain \rightarrow \reals$, i.e. $d(x, y)$ represents the distance between $x$ and $y$. 
  Given a set of 
\textit{centers} $P$ we define the distance of a point $x$ from $P$ to be $d(x, P) = \min_{c \in P} d(x, c)$. Now given a set of $n$ input points $\Support \subseteq \Domain$ and a set of centers 
$P \subseteq \Omega$ we define the cost of $P$ for $\Support$ to be $d(\Support, P) = \sum_{x \in \Support} d(x, P)$. The $k$-means problem is the problem of minimizing the \textit{squared cost} 
$d^2(\Support, P) = \sum_{x \in \Support} d^2(x, P)$ over the choice of centers $P$ subject to the constraint $|P| = k$. We assume that the diameter of the metric space is 
$\Delta = \max_{x, y \in \Support} d(x, y)$.

\subsection{Minimum spanning tree in Euclidean space}

Given a set of points $\Support$ in $d$ dimensions, the minimum spanning tree problem in Euclidean space ask to compute the a spanning tree $T$ on the points minimizing the sum of weights of the edges. The weight of an edge between two points is equal to their Euclidean distance. 

We will focus on a simpler variant of the problem which is to compute the weight of the best possible spanning tree, i.e. estimate the quantity
$\min_{\text{tree } T} \sum_{(x,x') \in T} \|x - x'\|_2$.

  \makeatletter{}\section{Basic Primitives}
\label{s:primi}
  
In this section, we describe some primitive operations that can be efficiently implemented in this model. We will use these primitives as black boxes in the algorithms for the combinatorial problems we consider. We make this separation as these primitives are commonly used building blocks and will make the presentation of our algorithms cleaner.

A lot of the algorithmic primitives are based on constructing random subsets of the domain and querying the random oracle $\Cond$ with a description of this set. A barrier is that such subsets have description complexity that is linear in the domain size. For this reason, we will use a pseudorandom set whose description is polylogarithmic in the domain size. The main tool to do this is Nisan's pseudorandom generator \cite{Nisan90} which produces pseudorandom numbers that appear as perfectly random to algorithms running in polylogarithmic time.

\begin{theorem}[Nisan's Pseudorandom Generator \cite{Nisan90}]
\label{th:nisan}
Let $U_N$ and $U_{\ell}$ denote uniformly random binary sequences of length $N$ and $\ell$ respectively.
There exists a map $G : \{0, 1\}^{\ell} \rightarrow \{0, 1\}^{N}$ such that for any algorithm $A : \{0, 1\}^{N} \rightarrow \{0, 1\}$, with $A \in \SPACE(S)$, where $S = S(N)$,
it holds that 
  \[ \abs{\Prob(A(U_{N}) = 1) - \Prob(A(G(U_{\ell})) = 1)} \le 2^{-S} \]
for $\ell = \Theta(S \log N)$. 
\end{theorem}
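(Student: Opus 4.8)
The plan is to reconstruct Nisan's ``recursive hash‑and‑double'' construction and the accompanying hybrid analysis, since the theorem is his. Fix a family $\mathcal{H}$ of (almost) pairwise‑independent hash functions on $\{0,1\}^m$, each specifiable by $O(m)$ bits (for instance $x \mapsto ax+b$ over $\mathrm{GF}(2^m)$). Define $G_i : \{0,1\}^m \times \mathcal{H}^i \to \{0,1\}^{m 2^i}$ recursively by $G_0(x) = x$ and
\[ G_i(x, h_1, \dots, h_i) = \big( G_{i-1}(x, h_1, \dots, h_{i-1}),\ G_{i-1}(h_i(x), h_1, \dots, h_{i-1}) \big), \]
so that the $2^i$ output blocks of $G_i$ come from a depth‑$i$ binary tree whose $j$‑th level everywhere reuses the single function $h_j$; this is what keeps the seed short. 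Taking $m = \Theta(S)$ (we are in the regime $S \gtrsim \log N$; otherwise replace $S$ by $S+\log N$ below), $k = \ceil{\log(N/m)}$, and truncating $G_k$ to $N$ bits gives $G := G_k$ with seed length $\ell = m + O(mk) = \Theta(S \log N)$.

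For the analysis I would model the distinguisher $A \in \SPACE(S)$ as an oblivious, read‑once, width‑$W$ branching program, $W = 2^{O(S)}$, that scans its $N$‑bit input in consecutive $m$‑bit blocks; then the probability $A$ accepts a given string is determined by composing $N/m$ transition functions $T_1, \dots, T_{N/m}$, where $T_j$ depends only on the $j$‑th block. The combinatorial heart is a \emph{mixing lemma}: for fixed functions $\alpha, \beta : \{0,1\}^m \to [W]$, if $h$ is drawn from $\mathcal{H}$ then, except with probability $2^{-\Omega(m)}$ over $h$, it holds for all $a,b \in [W]$ that
\[ \abs{ \Prob_x\!\big[\alpha(x) = a \wedge \beta(h(x)) = b\big] - \Prob_x\!\big[\alpha(x) = a\big]\cdot \Prob_y\!\big[\beta(y) = b\big] } \le 2^{-\Omega(m)}. \]
I would prove this by a second‑moment estimate: pairwise independence makes $\Exp_h\big(\Prob_x[\alpha(x)=a \wedge \beta(h(x))=b] - \Prob_x[\alpha(x)=a]\,\Prob_y[\beta(y)=b]\big)^2$ of order $2^{-\Omega(m)}$ (it is essentially a collision probability), and then Markov's inequality plus a union bound over the $W^2$ pairs $(a,b)$ gives the statement after absorbing the $W^2$ loss into the constant in $m = \Theta(S)$.

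Then comes the hybrid argument. Let $H_j$ be the distribution obtained by concatenating $2^{k-j}$ independent copies of $G_j(\cdot, h_1, \dots, h_j)$ that share one random draw of $h_1, \dots, h_j$; so $H_0$ is uniform on $\{0,1\}^N$ and $H_k = G(U_\ell)$. Moving from $H_j$ to $H_{j+1}$ replaces, inside each of the $2^{k-j-1}$ length‑$m2^{j+1}$ super‑blocks, an independent pair $\big(G_j(x), G_j(x')\big)$ by $\big(G_j(x), G_j(h_{j+1}(x))\big)$. Conditioning on the program's state $u$ entering the second half, the two halves' actions are fixed functions of $x$ and of the second seed, so the mixing lemma — applied with $\alpha$ the program's state‑transition summary over the first $G_j$‑copy and $\beta$ the summary over the second (both \emph{determined by the fixed program}, hence legitimate to fix) — bounds the statistical‑distance cost of that replacement by $\poly(W)\cdot 2^{-\Omega(m)}$ whenever $h_{j+1}$ is good. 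Union‑bounding the ``bad $h$'' events, and summing the costs, over all $\sum_{j=0}^{k-1} 2^{k-j-1} < N/m$ super‑block replacements (times the $\poly(W)$ instances of the lemma per replacement) yields $\abs{\Prob(A(U_N)=1) - \Prob(A(G(U_\ell))=1)} \le (N/m)\cdot \poly(W)\cdot 2^{-\Omega(m)} \le 2^{-S}$ once the constant in $m = \Theta(S)$ is taken large enough.

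The main obstacle is error control across the $\approx N/m$ doubling steps against a $2^{S}$‑state machine: the naïve accounting carries a factor of roughly $(N/m)\cdot 2^{O(S)}$, and beating it with a hash family whose mixing error is only $2^{-\Omega(m)}$ is exactly what forces $m = \Theta(S + \log N)$ and thus the seed length $\Theta(S\log N)$ — the construction cannot be pushed below this. A secondary point to get right is the reduction to an oblivious read‑once branching program: one must fix the order in which $A$ queries the $N$ bits and pad to a uniform block size, which is automatic here since the generator's output is consumed in a single left‑to‑right streaming pass, but would need the standard configuration‑graph argument for a general space‑$S$ machine.
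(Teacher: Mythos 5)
The paper does not prove this theorem; it is stated verbatim as a citation to Nisan (1990), and the primitives later use it as a black box. So there is no in‑paper argument to compare your proof against. Your reconstruction is a faithful and essentially correct sketch of Nisan's original proof: the recursive ``reuse one hash per level'' generator, the pairwise‑independent mixing lemma proved by a second‑moment (collision/Chebyshev) estimate with a union bound over the $\poly(W)$ state pairs, and the level‑by‑level hybrid where a single good $h_{j+1}$ is shared across all $2^{k-j-1}$ super‑blocks at depth $j$. Your error accounting — a total loss of order $(N/m)\cdot\poly(W)\cdot 2^{-\Omega(m)}$, absorbed by taking $m = \Theta(S+\log N)$ with a large enough constant — and your caveat that the seed length is cleanly $\Theta(S\log N)$ only in the regime $S\gtrsim\log N$ are both right; that regime holds in all of the paper's applications since $S = \Theta(\log|\Domain|)$. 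The one point worth making explicit, which you do flag, is the reduction of a space‑$S$ algorithm to an oblivious read‑once branching program of width $2^{O(S)}$: this is automatic here because the generator's output is consumed in a single left‑to‑right streaming pass, but is not a property of a general space‑$S$ machine without the standard configuration‑graph argument.
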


Nisan's pseudorandom generator is a simple recursive process that starts with $\Theta(S \log N)$ random bits and generates a sequence of $N$ bits. The sequence is generated in blocks of size $S$ and every block can be computed given the seed of size $\Theta(S \log N)$ using $O(\log N)$ multiplications on $S$ bit numbers. The overall time and space complexity to compute the $k$-th block of $S$ bits is $\tilde O(S \log N)$ and there exists a circuit of size $\tilde O(S \log N)$ that performs this computation.

Using Nisan's theorem, we can easily obtain pseudorandom sets for conditional sampling. We are interested in random sets where every element appears with probability $g(x)$ for some given function $g$. 
\begin{corollary}[Pseudorandom Subset Construction]
\label{cor:prsc}
    Let $R$ be a random set, described by a circuit $C_R$, that is generated by independently adding each element $x \in \Domain$ with probability $g(x)$, where $g$ is described by a circuit $C_g$. For any $\delta \ge |\Domain|^{-1}$, there exists a random set $R'$ described by a $\tilde O(|C_g| + \log|\Domain| \log(1/\delta) )$-sized circuit $C_{R'}$ such that
  \[ \abs{\Prob(\Cond(C \wedge C_R) = x) - \Prob(\Cond(C \wedge C_{R'}) = x)} \le \delta\]
for all circuits $C$ and elements $x \in \Domain$.
\end{corollary}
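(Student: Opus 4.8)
The plan is to absorb two distinct errors into $\delta$: the error from implementing each biased coin $g(v)$ with finitely many bits, and the error from replacing true randomness by Nisan's generator. First I would set $b := \lceil \log(2|\Domain|/\delta)\rceil$ (so $b = O(\log|\Domain|)$ since $\delta \ge |\Domain|^{-1}$) and reserve, inside the seed of a single copy of Nisan's generator $G$ from Theorem~\ref{th:nisan}, one block of $b$ bits for each $v \in \Domain$. The circuit $C_{R'}$, on input $v$, would compute the $v$-th $b$-bit block $r_v$ of $G(U_\ell)$ — a subcircuit of size $\tilde O(S\log N)$, with $N$ and $S$ fixed below — evaluate $g(v)$ via $C_g$, and output $1$ iff $r_v < 2^b g(v)$ (an exact comparison on the numerator and denominator of $g(v)$). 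Then $\abs{C_{R'}} = \tilde O(S\log N + \abs{C_g})$.

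\emph{Step 1 (finite precision).} Let $\tilde R$ be the random set defined by the same rule but with each $r_v$ drawn from a block of $N := b|\Domain|$ truly uniform bits $U_N$. Independently over $v$, $\Prob[v\in\tilde R] = \lceil 2^b g(v)\rceil\,2^{-b}$ lies within $2^{-b}$ of $g(v) = \Prob[v\in R]$, so coupling the two product distributions gives $\norm{R - \tilde R}_{\mathrm{TV}} \le |\Domain|\,2^{-b} \le \delta/2$. Since $\Cond(C\wedge C_R)$ is a fixed (input- and $C$-dependent) randomized map of the set $R$ whose only other randomness — the oracle's uniform choice of index — is independent of the set, the data-processing inequality gives $\abs{\Prob(\Cond(C\wedge C_R) = x) - \Prob(\Cond(C\wedge C_{\tilde R}) = x)} \le \delta/2$ for every circuit $C$ and every $x \in \Domain$.

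\emph{Step 2 (Nisan's generator).} Fix $C$ and $x$; if $x \notin \Support$ both sides are $0$, so let $x = x_i$. Put $\Support_C := \{x_j : j\in[n],\ C(x_j)=1\} \subseteq \Domain$ and define $h^{C,i}(T) := \mathbf{1}[x_i \in \Support_C \cap T]\,/\,\abs{\Support_C \cap T}$ (with $0/0:=0$), so that $\Prob(\Cond(C\wedge C_{\tilde R}) = x_i) = \Exp_{\tilde R}[h^{C,i}(\tilde R)]$. Viewed as a function of $U_N$, $h^{C,i}$ is computed by a read-once algorithm that streams over $v = 1,\dots,|\Domain|$ in order, evaluates $g(v)$ by $C_g$, reads the $v$-th block to decide whether $v\in\tilde R$, and maintains a counter $\abs{\Support_C \cap \tilde R}$ and a flag $\mathbf{1}[x_i\in\Support_C\cap\tilde R]$ by looking up $v$ in the \emph{hardwired} sorted list of $\Support_C$ (binary search, $O(\log n)$ space) and testing $v = x_i$. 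This uses space $S = \tilde O(\abs{C_g} + \log|\Domain| + \log(1/\delta))$ — $\tilde O(\abs{C_g})$ to evaluate $C_g$, $O(\log|\Domain|)$ for the counters and the block index, $O(b)$ for the comparison — and, crucially, $S$ does not depend on $\abs{C}$, since $C$ enters only through the non-uniform advice $\Support_C$. Because $h^{C,i}$ takes only the values $\{0\}\cup\{1/m : m\le n\}$ and each level set $\{h^{C,i}=1/m\}$ is decided in space $S$, the standard layer-cake argument upgrades Theorem~\ref{th:nisan} to $\abs{\Exp[h^{C,i}(U_N)] - \Exp[h^{C,i}(G(U_\ell))]} \le (1+\ln n)\,2^{-S}$ for $\ell = \Theta(S\log N)$; choosing $S$ so that $(1+\ln n)2^{-S} \le \delta/2$ (absorbed into the $\tilde O$), and noting $\Exp[h^{C,i}(G(U_\ell))] = \Prob(\Cond(C\wedge C_{R'}) = x_i)$ by construction of $C_{R'}$, this bounds $\abs{\Prob(\Cond(C\wedge C_{\tilde R})=x_i) - \Prob(\Cond(C\wedge C_{R'})=x_i)}$ by $\delta/2$. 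Combining with Step 1 gives the claimed inequality. Since $S$ is the same for every pair $(C,i)$, one generator $G$ — hence one set $R'$ — serves all $C$ and all $x$ at once, and $\abs{C_{R'}} = \tilde O(S\log N + \abs{C_g}) = \tilde O(\abs{C_g} + \log|\Domain|\log(1/\delta))$ using $\log N = O(\log|\Domain|)$.

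The step I expect to be delicate is keeping $S$ — and therefore $\ell$ and $\abs{C_{R'}}$ — independent of $\abs{C}$: this forces us not to evaluate $C$ online but to feed the algorithm the set $\Support_C$ as advice, which is sound only because Nisan's generator fools \emph{non-uniform} space-bounded (read-once branching-program) computation. A secondary point of care is that the guarantee is for read-once computation, so the algorithm must consume the pseudorandom blocks in exactly the order in which $C_{R'}$ addresses them; once that is arranged, propagating the precision parameter $b$ through the seed-length and circuit-size bookkeeping is routine. (Alternatively one may keep the oracle's index choice explicit and realize it by reservoir sampling in the same streaming pass, at the cost of $O(\log(n/\delta))$ extra random bits per block; the $[0,1]$-valued route above avoids this.)
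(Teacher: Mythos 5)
Your proof is essentially the paper's, and the two pieces you glue together are exactly the ones the paper uses: a finite-precision truncation step costing $\delta/2$, followed by an application of Nisan's generator to a streaming implementation of the oracle, with the two errors combined by the triangle inequality. Where you differ is in being considerably more careful about two points the paper glosses over, and you should be aware that your extra care is in fact necessary.

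First, the paper simply asserts that the streaming algorithm ``requires only memory to store the two variables, which is equal to $\Theta(\log|\Domain|)$'', ignoring that the algorithm as written must evaluate $C(x)$ for each $x$, and $|C|$ can be arbitrary. Your resolution -- hardwiring $\Support_C$ as non-uniform advice so that $S$ is independent of $|C|$, and invoking the fact that Nisan's generator fools non-uniform read-once branching programs -- is the right way to make the argument rigorous, and it is genuinely missing from the paper. Second, the paper's algorithm does reservoir sampling with fresh random bits per element but never says how those bits are treated when Nisan is applied; your $[0,1]$-valued function $h^{C,i}$ plus the layer-cake decomposition sidesteps that entirely. (Your slack of $(1+\ln n)$ is not even needed: since $h^{C,i}$ takes values in $\{0\}\cup\{1/m : m\le n\}$, one has $\Exp[h] = \int_0^1 \Prob[h\ge t]\,dt$ with total coefficient mass at most $1$, so the error is at most $2^{-S}$, matching the paper.)

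One small inefficiency you should fix: you include $\tilde O(|C_g|)$ in the space bound $S$, which then gets multiplied by $\log N = O(\log|\Domain|)$ in the seed length and gives $|C_{R'}| = \tilde O(|C_g|\log|\Domain| + \cdots)$ rather than the claimed $\tilde O(|C_g| + \cdots)$. The fix is to treat $g(v)$ the same way you treat $\Support_C$ -- as hardwired advice to the branching program -- so that $S = O(\log|\Domain|)$ only; the single copy of $C_g$ lives in the final circuit $C_{R'}$ (for computing the comparison threshold on input $v$) but not inside the seed length. With that adjustment your bookkeeping matches the statement, and your argument is a tighter version of the paper's own.
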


\begin{proof}
  The corollary is an application of Nisan's pseudorandom generator for conditional sampling. A simple linear time algorithm that performs conditional sampling based on a random set $R$ as follows. We keep two variables, $cnt_{matched}$ the number of elements that pass the criteria of selection which is initialized at value 0, and the selected element. For every element $x$ in the domain $\Domain$ in order, we perform the following:
\begin{enumerate}
  \item Draw $k$ random bits $\vec b \in \{0,1\}^k$ and check whether the number $b \cdot 2^{-k} > g(x)$.
  \item If yes, skip $x$ and continue in the next element.
  \item Otherwise if $C(x) = 1$, increment $cnt_{matched}$ and with probability $cnt_{matched}^{-1}$ change the selected element to $x$.
\end{enumerate}

Note that here, we have truncated the probabilities $g(x)$ to $2^{-k}$ accuracy, so the random set $\bar R$ used is slightly different than $R$. However, picking $k = O(\log(|\Domain|/\delta))$, we have that
\[ \abs{\Prob(\Cond(C \wedge C_R) = x) - \Prob(\Cond(C \wedge C_{\bar R}) = x)} \le \frac {\delta} 2\]
for all circuits $C$ and elements $x \in \Domain$.

To prove the statement, we will use Nisan's pseudorandom generator to generate the sequence of bits for the algorithm. The algorithm requires only memory to store the two variables which is equal to $\Theta( \log{|\Domain|}) \le k$. Moreover, the total number of random bits used is $k |\Domain|$ and thus by Nisan's pseudorandom generator we can create a sequence of random bits $R'$ based on a seed of size $O(\log(k |\Domain|)$ and give them to the algorithm. This sequence can be computed in blocks of size $k$ using a circuit $C^r$ of size  $\tilde O(\log(k |\Domain|) \log(1/\delta) ) = \tilde O(\log(|\Domain|) \log(1/\delta) )$. We align blocks of bits with points $x \in \Domain$ and thus the circuit $C^r$ gives for input $x$ the $k$ bits needed in the first step of the above algorithm. This implies that the circuit $C_{R'}$ that takes the output of $C^r$ and compares them with $C_g$ satisfies:
\[ \abs{\Prob(\Cond(C \wedge C_{\bar R}) = x) - \Prob(\Cond(C \wedge C_{R'}) = x)} \le \frac {\delta} 2\]
for all circuits $C$ and elements $x \in \Domain$. By triangular inequality, we get the desired error probability with respect to the circuit $C_R$.

The total size of the circuit $C_{R'}$ is $\tilde O(|C_g| + \log|\Domain| \log(1/\delta) )$ which completes the proof.
\end{proof}

\makeatletter{}\subsection{Point in Set and Support Estimation}
\label{s:primSE}

\subsubsection{Point in Set}
 
  The \textit{point in set} function takes a set $S \subseteq \Domain$ given as a circuit $C$ and returns one point $x \in S$ or $\bot$ if there is no such point in the set of input points, i.e. $\Support \cap S = \varnothing$. The notation that we use for this function is $\textsc{EP}( \cdot )$ and takes as input the description $C$ of $S$. Obviously the way to implement this function in the conditional 
sampling model is trivial. Since the point in set returns any point in the described set $S$ a random point also suffices. Therefore we just call the oracle $\Cond(C)$ and we return this as a result to $\EP(C)$.

We can test whether there is a unique point in a set by setting $x^* = \EP(C)$ and querying $\EP(C \wedge \mathbb{I}_{x \neq x^*})$. Similarly, if the set has $k$ points, we can get all points in the set in time $O(|C| k + k^2)$ by querying $k$ times, excluding every time the points that have been found.

\subsubsection{Support Estimation}

  The \textit{support estimation} function takes as input a description $C$ of a set $S \subseteq \Domain$ and outputs an estimation for the size of the set $\Set = S \cap \Support$. We call this function $\SE(C)$. \\
  
  The first step is to define a random subset $R \subseteq \Domain$ by independently adding every element $x \in \Domain$ with probability  $\frac{1}{\alpha}$
for some integer parameter $\alpha$ that corresponds to a guess of the support size. Let $C_R$ be the description of $R$. We will later use Corollary~\ref{cor:prsc} to show that an approximate version of $C_R$ can be efficiently 
constructed. We then use the Point-In-Set primitive and we query $\EP(C \wedge C_R)$. This tests whether $S_{\Support} \cap R \stackrel{?}{=} \varnothing$ which happens with probability
\[ \Prob[ S_{\Support} \cap R = \varnothing ] = \Prob[ ( s_1  \notin R ) \wedge ( s_2 \notin R ) \wedge \cdots \wedge ( s_k \notin R ) ] = \left( 1 - \frac{1}{\alpha} \right)^k. \]

Using this query, we can distinguish whether $|S_{\Support}| \le (1-\eps) \alpha$ or $|S_{\Support}| \ge (1+\eps) \alpha$. The probabilities of these events are $P_1 \ge \left( 1 - \frac{1}{\alpha} \right)^{\alpha (1+\eps)}$ and $P_2 \le \left( 1 - \frac{1}{\alpha} \right)^{\alpha (1-\eps)}$ respectively. The total variation distance in the two cases is 
  \begin{equation*}
  \label{eq:primSE1}
    P_1 - P_2 = P_1 \left( 1 - \frac{P_2}{P_1} \right) \ge \left( 1 - \frac{1}{\alpha} \right)^{\alpha (1-\eps)} \left( 1 - \left( 1 - \frac{1}{\alpha} \right)^{\alpha \cdot 2 \eps} \right) \ge  \left( \frac{1}{4} \right)^{1 - \eps} \left( 1 - e^{-2 \eps} \right) \ge \frac{\eps}{2 e}
  \end{equation*}
where for the second to last inequality we assumed $\alpha \ge 2$\footnote{The case $a=1$ can be trivially handled by listing few points from $S_{\Support}$.}. 

We can therefore conclude that for we can distinguish with probability $\delta$ between $|S| \le (1 - \eps) \alpha$ and $|S| \ge (1 + \eps) \alpha$ using $O(\log \frac 1 {\delta} /\eps^2)$ queries of the form $\EP(C \wedge C_R)$. Binary searching over possible
$\alpha$'s, we can compute an $(1 \pm \eps)$ approximation of the support size by repeating $O(\log n)$ times, as there are $n$ possible values for $\alpha$. A more efficient estimator, since we care about multiplicative approximation, only considers values for $\alpha$ of the form $(1 + \eps)^i$. There are 
are $\log_{1 + \eps} n = O(\frac 1 {\eps} \log n)$ possible such values, so doing a binary search over them takes $O(\log \frac 1 {\eps} + \log \log n)$ iterations.  
Thus, overall, the total number of queries is $\tilde O( \log \frac 1 {\delta} \log \log n  /\eps^2)$. 

To efficiently implement each query, we produce a circuit $C_{R'}$ using Corollary~\ref{cor:prsc} with parameter $\delta'$ for error and a constant function $g(x) = 1/2$. The only change is that at every comparison the probabilities $P_1$ and $P_2$ are accurate to within $\delta'$. Choosing $\delta' = \frac {\eps} {|\Omega|}$ implies that $|P_1 - P_2|$ is still $\Omega(\eps)$ and thus the same analysis goes through. The circuit $C \wedge C_{R'}$ has size $\tilde O(|C| + \log^2(|\Domain|) + \log(|\Domain|) \log(1/\eps) )$ which implies that the total runtime for $\tilde O( \log \frac 1 {\delta} \log \log n  /\eps^2)$ queries is $\tilde O \left( (|C| + \log^2(|\Domain|) ) \log \frac 1 {\delta} /\eps^2 \right)$ as $n = O(|\Omega|)$.

Using our conditional sampling oracle, we are able to obtain the following lemma:
\begin{lemma}
  \label{l:primSE}
    There exists a procedure $\SE(C)$ that takes as input the description $C$ of a set $S$ and computes an $(\eps, \delta)$-multiplicative approximation of the size of $S$ using $\tilde O(\log \log n \log(1 / \delta) / \eps^2 )$
  conditional samples in time $\tilde O\left( (|C| + \log^2 |\Domain|) \log(1 / \delta) / \eps^2 \right)$.
\end{lemma}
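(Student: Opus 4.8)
The plan is to turn the informal estimator described just before the lemma into a procedure and then bound its query count, the sizes of the circuits handed to $\Cond$, and the running time.

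\textbf{Core test.} For a candidate value $\alpha$, I would build the random set $R \subseteq \Domain$ that contains each $x$ independently with probability $1/\alpha$ and issue the query $\EP(C \wedge C_R)$. Since $\EP$ returns $\bot$ precisely when $\Set \cap R = \varnothing$, one query is a Bernoulli trial with success probability $(1-1/\alpha)^{|\Set|}$, monotone in $|\Set|$. The calculation preceding the lemma shows that this probability differs by at least $\gamma := \eps/(2e) = \Omega(\eps)$ between the regimes $|\Set| \le (1-\eps)\alpha$ and $|\Set| \ge (1+\eps)\alpha$ (assuming $\alpha \ge 2$; the case $\alpha = 1$ is handled by listing $O(1)$ points of $\Set$ with the Point-in-Set primitive). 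Hence, by a Hoeffding bound, taking $m = O(\log(1/\delta')/\gamma^2) = O(\log(1/\delta')/\eps^2)$ independent copies of the query and comparing the empirical frequency of $\bot$ against the midpoint threshold decides which regime holds, with failure probability at most $\delta'$.

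\textbf{Outer search.} Because only a multiplicative estimate is wanted, it suffices to search over the $O(\eps^{-1}\log n)$ candidates $\alpha \in \{(1+\eps)^i\}$; the success probability of the core test is monotone in $\alpha$, so a binary search over this list locates the right scale — and thus a $(1+O(\eps))$-multiplicative estimate of $|\Set|$, which we rescale $\eps$ by a constant to clean up — in $O(\log(1/\eps) + \log\log n)$ steps. Setting the per-test confidence to $\delta' = \delta/O(\log(1/\eps)+\log\log n)$ and union bounding over the steps yields overall failure probability $\delta$, at the cost of only lower-order $\log\log\log$ terms inside $\log(1/\delta')$. Multiplying the per-test query count by the number of steps gives $\tilde O(\log\log n \cdot \log(1/\delta)/\eps^2)$ conditional samples, since the factor $\log(1/\eps)$ coming from the number of steps is polylogarithmic in the $1/\eps^2$ already present and hence absorbed by $\tilde O$.

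\textbf{Derandomizing the set, and runtime.} The remaining point is that $R$ has description complexity linear in $|\Domain|$, so I would replace $C_R$ by the pseudorandom circuit $C_{R'}$ of Corollary~\ref{cor:prsc}, applied with a constant inclusion-probability function $g$ and error parameter $\delta'' = \eps/|\Domain|$. This perturbs the success probability of each query by at most $\delta'' \ll \gamma$, so the $\Omega(\eps)$ gap survives and the Hoeffding analysis is untouched, while $C_{R'}$ has size $\tilde O(\log|\Domain|\log(1/\delta'')) = \tilde O(\log^2|\Domain|)$. Each query then uses the circuit $C \wedge C_{R'}$ of size $\tilde O(|C| + \log^2|\Domain|)$, and since a $\Cond$ query costs time linear in the circuit size, the total runtime is (number of queries) $\times$ (circuit size) $= \tilde O\big((|C| + \log^2|\Domain|)\log(1/\delta)/\eps^2\big)$, using $n = O(|\Domain|)$ to fold $\log\log n$ into $\tilde O$. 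I expect the main obstacle to be not any individual estimate but the bookkeeping: choosing $\delta''$, $\delta'$, and the number of search steps consistently so that (i) the $\Omega(\eps)$ gap genuinely survives derandomization, (ii) the union bound over the binary search closes, and (iii) all the nested $\log$, $\log\log$, and $\log(1/\eps)$ factors collapse into the stated $\tilde O$ bounds rather than surfacing as extra polynomial factors.
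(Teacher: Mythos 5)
Your proposal matches the paper's own argument step by step: the same Bernoulli test based on a random set $R$ with inclusion probability $1/\alpha$ queried via $\EP(C\wedge C_R)$, the same $\Omega(\eps)$ gap between the two regimes, the same Hoeffding-based repetition with per-test confidence $\delta'$, the same binary search over the $O(\eps^{-1}\log n)$ geometric candidates giving $O(\log(1/\eps)+\log\log n)$ rounds, and the same derandomization via Corollary~\ref{cor:prsc} with error $\eps/|\Domain|$ to bound the circuit size by $\tilde O(|C|+\log^2|\Domain|)$. This is essentially the paper's proof; the only cosmetic difference is that you explicitly note monotonicity to justify the binary search, which the paper leaves implicit.
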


\subsubsection{Distinct Values}

  One function that is highly related to support estimation is the \textit{distinct values} function, which we denote by $\DV$. The input of this function is a description $C$ of a set $S$ together with a function $f : \Domain \rightarrow [M]$ described by a circuit $C_f$. The output
of $\DV$ is the total number of distinct values taken by $f$ on the subset $\Set = S \cap \Support$, i.e.
\[ \DV(C, C_f) = |\{ f(x) \mid x \in \Set\}| \]

To implement the distinct values function we perform support estimation on the range $[M]$ of the function $f$. This is done as before by computing random sets $R \subseteq [M]$ and performing the queries $\EP( C \wedge (C_R \circ C_f) )$, where $C_R$ is the circuit description of the set $R$ and the circuit $C_R \circ C_f$ takes value $1$ on input $x$ if $f(x) \in R$ and $0$ otherwise.

\begin{lemma}
  \label{l:primDV}
    There exists a procedure $\DV(C, f)$ that takes as input the description $C$ of a set $S$ and a function $f: \Domain \rightarrow [M]$ given as a circuit $C_f$ and computes an $(\eps, \delta)$-multiplicative approximation to the number of distinct values of $f$ on the set $S \cap \Support$  in time $\tilde O(( |C| + |C_f| + \log^2 M ) \log(1 / \delta) / \eps^2 )$. 
  The number of conditional samples used is  $\tilde O( \log \log n \log(1 / \delta) / \eps^2 )$.
\end{lemma}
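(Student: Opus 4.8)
The plan is to reduce $\DV$ to support estimation carried out \emph{in the range} $[M]$ rather than in the domain $\Domain$. The starting observation is that $\DV(C, C_f) = |f(\Set)|$ where $\Set = S \cap \Support$, so counting distinct values of $f$ on $\Set$ is exactly a support-estimation problem for the subset $f(\Set) \subseteq [M]$. Accordingly, I would run the algorithm from the proof of Lemma~\ref{l:primSE} essentially verbatim, with two changes: the random test set $R$ is now drawn inside $[M]$ (each value $v \in [M]$ included independently with probability $1/\alpha$, where $\alpha$ is the current guess for $|f(\Set)|$), and each nonemptiness test, which in Lemma~\ref{l:primSE} has the form $\EP(C \wedge C_R)$, is here of the form $\EP(C \wedge (C_R \circ C_f))$. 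This query returns $\bot$ precisely when no input point $x_i$ with $C(x_i) = 1$ has $f(x_i) \in R$, i.e. when $f(\Set) \cap R = \varnothing$.

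The key point is that this emptiness event depends on $f(\Set)$ only through its cardinality: since values of $[M]$ are included in $R$ independently, $\Prob[f(\Set) \cap R = \varnothing] = (1 - 1/\alpha)^{|f(\Set)|}$, which is formally identical to the expression $(1-1/\alpha)^{|\Set|}$ analyzed for support estimation. Hence the same computation shows that for $\alpha \ge 2$ the cases $|f(\Set)| \le (1-\eps)\alpha$ and $|f(\Set)| \ge (1+\eps)\alpha$ are separated in total variation by $\Omega(\eps)$, so $O(\log(1/\delta)/\eps^2)$ repetitions of the $\EP$ query distinguish them with error $\delta$; and a binary search over the candidate values $\alpha \in \{(1+\eps)^i\}$ finishes in $O(\log\log n + \log(1/\eps))$ rounds after a union-bound rescaling of $\delta$. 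Here I would bound the number of candidates by $|f(\Set)| \le |\Set| \le n$, so there are $O(\eps^{-1}\log n)$ of them, which is what produces $\tilde O(\log\log n \log(1/\delta)/\eps^2)$ conditional samples rather than a bound in terms of $M$.

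For the running time I would invoke Corollary~\ref{cor:prsc} with domain $[M]$ and a constant inclusion-probability function $g$ (as in the proof of Lemma~\ref{l:primSE}), choosing its error parameter to be a small inverse polynomial in $M$ so that the perturbation of the test probabilities stays below the $\Omega(\eps)$ gap. This yields a pseudorandom description $C_{R'}$ of size $\tilde O(\log^2 M)$ (absorbing a $\log(1/\eps)$ factor); composing it with $C_f$ and conjoining with $C$ gives a circuit $C \wedge (C_{R'} \circ C_f)$ of size $\tilde O(|C| + |C_f| + \log^2 M)$, and multiplying by the number of queries gives total time $\tilde O((|C| + |C_f| + \log^2 M)\log(1/\delta)/\eps^2)$, as claimed.

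The proof is mostly bookkeeping on top of Lemma~\ref{l:primSE}; the one genuine step is the observation in the second paragraph — that pushing a coordinatewise-independent random subset of $[M]$ through $f$ makes the emptiness probability a function of $|f(\Set)|$ alone — which is exactly what lets the entire support-estimation argument transfer unchanged. The only mild points of care are substituting $\Domain \mapsto [M]$ consistently (so that the Corollary~\ref{cor:prsc} error and the circuit sizes are phrased in $M$) while still bounding the binary-search depth by $\log\log n$; neither is a real obstacle.
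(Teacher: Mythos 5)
Your proposal is correct and follows the same approach as the paper: the paper's proof of this lemma is a single sentence, stating exactly that one runs the support estimation argument of Lemma~\ref{l:primSE} over the range $[M]$, drawing the test set $R \subseteq [M]$ and replacing each query $\EP(C \wedge C_R)$ with $\EP(C \wedge (C_R \circ C_f))$. You have simply spelled out the details that the paper leaves implicit — in particular the observation that independence of membership in $R$ over $[M]$ makes the emptiness probability $(1-1/\alpha)^{|f(\Set)|}$, which is what lets the whole analysis carry over verbatim, and the bookkeeping of applying Corollary~\ref{cor:prsc} over $[M]$ while bounding the binary-search depth by $\log\log n$ via $|f(\Set)| \le n$.
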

 
\makeatletter{}\subsection{Point of Maximum Weight}
\label{s:primMAX}

  The \textit{point of maximum weight} function takes as input a description $C$ of a set $S$ together with a function $f : \Domain \rightarrow [M]$ given by a circuit $C_f$. Let $\Set = S \cap \Support$. The output of the function is the value $\max_{x \in \Set} f(x)$. We call this function 
$\MAX(C, C_f)$. Sometimes we are interested also in finding a point where this maximum is achieved, i.e. $\arg \max_{x \in \Set} f(x)$ which we call $\ARGMAX(C, C_f)$. \\

This is simple to implement by binary search for the value $\MAX(C, C_f)$. At every step, we make a guess $m$ for the answer and test whether there exists a point in the set $\Set \cap \{f(x) \ge m\}$. This requires $\log M$ queries and the runtime is $O( (|C| + |C_f|) \log M )$.

\begin{lemma}
  \label{l:primMAXwM}
    There exists a procedure $\MAX(C, C_f)$ that takes as input the description $C$ of a set $S$ and a function $f: \Domain \rightarrow [M]$ given as a circuit $C_f$ and computes the value $\max_{x \in \Set} f(x)$ using $O( \log M )$ conditional samples in time $O( ( |C| + |C_f| ) \log M )$.
\end{lemma}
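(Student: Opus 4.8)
The plan is to compute $\max_{x \in \Set} f(x)$ by a binary search over the value range $[M]$, using the Point-in-Set primitive $\EP$ of Section~\ref{s:primSE} as the only form of oracle access. The key observation is that for any threshold $m \in \{1,\dots,M\}$ one has $\max_{x \in \Set} f(x) \ge m$ if and only if $\Set \cap \{x \in \Domain : f(x) \ge m\} \neq \varnothing$. To test the right-hand side, I would build from $C$ and $C_f$ the circuit $C \wedge \mathbb{I}_{f(x) \ge m}$, which on input $x$ outputs $1$ exactly when $C(x)=1$ and the integer encoded by the output of $C_f(x)$ is at least $m$; this is $C$ and-ed with an $O(\log M)$-gate comparator placed on top of $C_f$ and the hard-wired constant $m$, so its size is $O(|C| + |C_f| + \log M) = O(|C| + |C_f|)$, the last equality because $C_f$ must already have size at least $\log M$ to write its own output. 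Calling $\EP(C \wedge \mathbb{I}_{f(x) \ge m}) = \Cond(C \wedge \mathbb{I}_{f(x) \ge m})$ returns $\bot$ precisely when that set is empty, since $\Cond$ is an exact oracle; in particular the test is error-free and no confidence amplification is needed, which is why the stated bound carries no $\log(1/\delta)$ factor (in contrast to Lemma~\ref{l:primSE}).

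With this subroutine in hand, I would first handle the degenerate case by querying $\EP(C)$: if it returns $\bot$ then $\Set = \varnothing$ and we output $\bot$. Otherwise maintain an interval $[\ell, r]$, initialized to $[1, M]$, with the invariant that some point of $\Set$ has $f$-value at least $\ell$ while no point of $\Set$ has $f$-value exceeding $r$; this holds initially because $\EP(C)$ succeeded and $f$ takes values in $[M]$. At each step set $m = \lceil (\ell+r)/2 \rceil$ and call $\EP(C \wedge \mathbb{I}_{f(x) \ge m})$: if it returns a point set $\ell \gets m$, and if it returns $\bot$ set $r \gets m-1$. Each step shrinks $r-\ell$ by a constant factor while preserving the invariant, so after $O(\log M)$ steps $\ell = r$, and by the invariant this common value equals $\max_{x \in \Set} f(x)$. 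For the complexity, the procedure makes one $\EP$ call per binary-search step plus the initial one, hence $O(\log M)$ conditional samples, and by the assumption on $\Cond$ each call runs in time linear in the circuit size $O(|C| + |C_f|)$, giving total running time $O((|C| + |C_f|)\log M)$. The point returned by the last successful $\EP$ call (equivalently, one extra query $\EP(C \wedge \mathbb{I}_{f(x) \ge \MAX(C,C_f)})$) witnesses the maximum, so the same routine also yields $\ARGMAX(C, C_f)$ within the same bounds.

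There is no real obstacle here — it is a textbook binary search wrapped around an exact emptiness test — so the only point requiring care is the circuit-size accounting: one must check that appending a threshold comparator to $C_f$ adds only $O(\log M)$ gates and that this term is absorbed into $|C_f|$, and that assembling $C \wedge \mathbb{I}_{f(x) \ge m}$ from $C$ and $C_f$ costs a constant number of additional gates. Everything else follows immediately from the fact that $\Cond$ reports $\bot$ if and only if the conditioning set contains no input point.
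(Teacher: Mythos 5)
Your proof is correct and takes essentially the same approach as the paper's: a binary search over the value range $[M]$, where each step uses a conditional query on $C \wedge \mathbb{I}_{f(x) \ge m}$ to test emptiness of $\Set \cap \{x : f(x) \ge m\}$. The paper's own proof is a three-sentence sketch; your version just spells out the circuit-size accounting and the binary-search invariant.
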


An alternative algorithm solves this task in $\tilde O(\log n \log(1/\delta) )$ queries with probability $\delta$. The algorithm starts with the lowest possible value $m$ for the answer, i.e. $m = 1$. At every step, it asks the $\Cond$ oracle for a random point with $f(x) > m$. If such a point $x^*$ exists, the algorithm updates the estimate by setting $m = f(x^*)$. Otherwise, if no such point exists, the guessed value of $m$ is optimal. It is easy to see that every step, with probability $1/2$, half of the points are discarded. 
Repeating $\log( \log n/\delta)$ times we get that the points are halfed with probability $1 - \delta/\log n$. 
Thus after $O(\log n \log( \log n/\delta))$ steps, the points will be halfed $\log n$ times and the maximum will be identified with probability $1 - \delta$. Thus, the total number of queries is $\tilde O(\log n \log(1/\delta) )$ and we obtain the following lemma.

\begin{lemma}
  \label{l:primMAX}
    There exists a procedure $\MAX(C, C_f)$ that takes as input the description $C$ of a set $S$ and a function $f: \Domain \rightarrow [M]$ given as a circuit $C_f$ and computes the value $\max_{x \in \Set} f(x)$ using $\tilde O( \log n \log( 1 /\delta) )$ conditional samples in time $\tilde O( ( |C| + |C_f| ) \log n \log( 1 /\delta) )$ with probability of error $\delta$.
\end{lemma}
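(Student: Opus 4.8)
The plan is to analyze precisely the randomized procedure sketched just above the lemma. Initialize $m \leftarrow 0$ (any value strictly below the minimum possible $f$-value). In each round, build the circuit $C \wedge C_{f>m}$ that accepts exactly the points $x$ with $C(x)=1$ and $f(x)>m$ (this is $C_f$ followed by a hard-wired comparator against the constant $m$, then ANDed with $C$), and call $\Cond(C \wedge C_{f>m})$. If it returns $\bot$, halt and output $m$; otherwise it returns some $(i,x^*)$, and we set $m \leftarrow f(x^*)$ and repeat. Correctness is deterministic: the algorithm outputs $m$ only when $\Cond(C\wedge C_{f>m})=\bot$, i.e. when no input point of $\Set$ satisfies $f(x)>m$; and every value $m$ we ever hold (after the first round) equals $f(x^*)$ for a genuine $x^*\in\Set$, so the output is exactly $\max_{x\in\Set}f(x)$ (and the degenerate case $\Set\cap\{C=1\}=\varnothing$ is detected at the first query, where we return $\bot$).

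The probabilistic heart is a ``halving'' claim for a single round. Suppose at the start of a round the conditioning set $\{x\in\Set : C(x)=1,\ f(x)>m\}$ has size $k>0$; since the $x_i$ are distinct, the returned $x^*$ is uniform over this set of $k$ points, whose $f$-values I sort as $v_1\le\cdots\le v_k$. If the index of $x^*$ is at least $\lceil k/2\rceil$ — which happens with probability at least $1/2$ — then after the update $m=v_i$ the strict inequality $f(x)>v_i$ kills every point with value $\le v_i$, including all ties, so at most $\lfloor k/2\rfloor$ points survive into the next round. Hence each round is an independent ``success'' with probability $\ge 1/2$, and as soon as $\lceil\log_2 n\rceil+1$ successes have occurred the conditioning set is empty and the algorithm halts.

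To turn this into a round bound that fails with probability at most $\delta$, I group the rounds into $\lceil\log_2 n\rceil+1$ consecutive blocks of $\lceil\log_2(\log n/\delta)\rceil$ rounds each; a fixed block contains no success with probability at most $2^{-\log_2(\log n/\delta)}=\delta/\log n$, so by a union bound over the (at most $\log n$) blocks the algorithm fails to terminate within all blocks with probability at most $\delta$. Thus, with probability at least $1-\delta$, the procedure halts after $O(\log n\,\log(\log n/\delta)) = \tilde O(\log n\,\log(1/\delta))$ rounds, i.e. that many conditional samples (capping the run at this many rounds and declaring failure otherwise accounts for the stated error probability $\delta$). The main obstacle here is precisely this bookkeeping: verifying that $\lceil\log_2 n\rceil+1$ successes genuinely empty the set even in the presence of repeated $f$-values, and tuning the block length so the overall error is $\le\delta$ rather than, say, $\delta\log\log n$; everything else is routine.

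Finally, for the running time, each query uses the circuit $C\wedge C_{f>m}$ of size $O(|C|+|C_f|+\log M)=O(|C|+|C_f|)$, using that $C_f$ has at least $\log M$ output gates so $\log M \le |C_f|$; by the model each call costs time linear in the circuit size, and there are $\tilde O(\log n\,\log(1/\delta))$ of them, giving total time $\tilde O\big((|C|+|C_f|)\log n\,\log(1/\delta)\big)$, as claimed.
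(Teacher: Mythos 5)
Your proof is correct and follows essentially the same approach as the paper: the algorithm that starts at the bottom, repeatedly conditions on $\{f > m\}$ and updates $m$ to the returned value, together with a halving argument amplified to confidence $1-\delta/\log n$ per halving stage. Your writeup is actually somewhat more careful than the paper's sketch, spelling out the tie-handling (strict inequality discards all points with value $\le f(x^*)$), the exact number $\lceil \log_2 n\rceil + 1$ of successes needed to empty the set, and the block/union-bound bookkeeping that the paper compresses into one sentence; none of these additions change the argument, they just make it rigorous.
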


\makeatletter{}\subsection{Sum of Weights of Points}
\label{s:primSUM}
  
  The \textit{sum of weights of points} function takes as input a description $C$ of a set $S$ together with a function $f : \Domain \rightarrow [M]$. The output of the function is an $(1 \pm \eps)$ approximation of the sum of all $f(x)$ for every $x$ in $\Set  = S \cap \Support$, i.e.
$\sum_{x \in \Set} f(x)$. We call this function $\SUM(C, C_f)$. \\

To implement this function in the conditional sampling model, we first compute $\MAX = \MAX(C,C_f)$ (Lemma \ref{l:primMAX}). We then create $k = \log_{1+\eps}( n / \eps) = O( \log n / \eps )$ sets $S_i = \{x \in S: f(x) \in ((1+\eps)^{-i}, (1+\eps)^{1-i} ] \cdot \MAX \}$ for $i \in [k]$, by grouping together points whose values are close. Let $C_{S_i}$ denote the circuit description of every set $S_i$.
We can get an estimate for the overall sum as  
\[ \SUM(C, C_f) = \sum_{i = 1}^{k} \SE\left( C_{S_i} \right) \frac{\MAX}{(1 + \eps)^{i - 1}} \]

To see why this is an accurate estimate we rewrite the summation in the following form:
\begin{equation}\label{eq:sum} 
  \sum_{x \in \Set} f(x) = \sum_{ \substack{x \in \Set: \\ f(x) \le \MAX \cdot (1+\eps)^{-k} }} f(x) + \sum_{i = 1}^{k} \sum_{x \in S_i \cap \Support} f(x)
\end{equation}

To bound the error for the second term of~\eqref{eq:sum}, notice that for every $i \in [k]$ and $x \in S_i$, we have that  
$f(x) \in ((1+\eps)^{-i}, (1+\eps)^{1-i} ] \cdot \MAX$. Thus, the value $|S_i \cap \Support| \frac{\MAX}{(1 + \eps)^{i - 1}}$ is a $(1+\eps)$-approximation to the sum $\sum_{x \in S_i\cap \Support} f(x)$. Since the primitive $\SE\left( C_{S_i} \right)$ returns a $(1+\eps)$-approximation to $|S_i \cap \Support|$, we get that the second term of~\eqref{eq:sum} is approximated by $\SUM(C, C_f)$ multiplicatively within $(1+\eps)^2 \le 1+3\eps$.

The first term introduces an additive error of at most $n \cdot \MAX \cdot (1+\eps)^{-k}  = \eps \cdot \MAX \le \eps \cdot \SUM(C, C_f)$ which implies that  $\SUM(C, C_f)$ gives $(1 \pm 4\eps)$-multiplicative approximation to the sum of weights. Rescaling $\eps$ by a constant, we get the desired guarantee.
Thus, we can get the estimate by using one query to the $\MAX$ primitive and $k = O( \log n / \eps )$ queries to $\SE$. For the process to succeed with probability $\delta$, we require that all $k$ of the $\SE$ queries to succeed with probability $\delta' = \delta / k$. Plugging in the corresponding guarantees of Lemmas~\ref{l:primSE} and~\ref{l:primMAX}, we obtain the following:

\begin{lemma}
  \label{l:primSUM}
    There exists a procedure $\SUM(C, C_f)$ that takes as input the description $C$ of a set $S$ and a function $f : \Domain \rightarrow [M]$ given by a circuit $C_f$ and computes an $(\eps, \delta)$-multiplicative approximation of the value $\sum_{x \in \Set} f(x)$ using 
    $\tilde O( \log n \log ( 1 / \delta ) / \eps^3 )$ conditional samples in time 
    $\tilde O\left( (|C| + |C_f| + \log^2 |\Domain|) \log n \log(1 / \delta) / \eps^3 \right)$.
\end{lemma}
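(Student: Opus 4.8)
The plan is to follow the construction sketched above: after one call to the $\MAX$ primitive, reduce the sum to $O(\log n/\eps)$ support-estimation queries, then chain the error and complexity bounds. First I would set $M^* = \MAX(C,C_f)$ via Lemma~\ref{l:primMAX}; since every $f$-value lies in $[M]$ and is therefore positive, $M^* = f(x^*)$ for some $x^*\in\Set$, so in particular $M^* \le \sum_{x\in\Set} f(x)$. Next, fix $k = \lceil\log_{1+\eps}(n/\eps)\rceil = O(\log n/\eps)$ and the buckets $S_i = \{x\in S : f(x)\in((1+\eps)^{-i},(1+\eps)^{1-i}]\cdot M^*\}$, each described by the circuit $C_{S_i} = C \wedge \mathbb{I}[(1+\eps)^{-i}M^* < f(\cdot)\le(1+\eps)^{1-i}M^*]$ of size $\tilde O(|C|+|C_f|)$ (a constant number of comparisons of the integer output of $C_f$ against fixed thresholds). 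The estimator is $\SUM(C,C_f) = \sum_{i=1}^k \SE(C_{S_i})\cdot M^*/(1+\eps)^{i-1}$.

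For correctness I would split the true sum as in~\eqref{eq:sum}. The tail over points with $f(x)\le M^*(1+\eps)^{-k} = (\eps/n)M^*$ contributes at most $n\cdot(\eps/n)M^* = \eps M^* \le \eps\sum_{x\in\Set}f(x)$, i.e.\ an additive error that is absorbed into a multiplicative $\eps$ one. For each bucket $i$, every $x\in S_i\cap\Support$ has $f(x)\in((1+\eps)^{-i},(1+\eps)^{1-i}]M^*$, so $|S_i\cap\Support|\cdot M^*/(1+\eps)^{i-1}$ is a $(1+\eps)$-multiplicative approximation of $\sum_{x\in S_i\cap\Support} f(x)$; composing with the $(1+\eps)$-guarantee of $\SE$ (Lemma~\ref{l:primSE}) gives a $(1+\eps)^2\le 1+3\eps$ factor per bucket, hence on the whole second term of~\eqref{eq:sum}. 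Combining with the tail bound yields a $(1\pm4\eps)$-approximation of $\sum_{x\in\Set}f(x)$, and rescaling $\eps$ by a constant gives the claimed guarantee.

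For the error probability I would run the $\MAX$ call and each of the $k$ $\SE$ calls with failure parameter $\delta' = \delta/(k+1)$, so a union bound gives overall success probability $\ge 1-\delta$, with $\log(1/\delta') = \log(1/\delta) + O(\log\log n + \log(1/\eps))$, which the $\tilde O$ notation absorbs. For complexity, Lemma~\ref{l:primSE} costs $\tilde O(\log\log n\,\log(1/\delta')/\eps^2)$ samples and $\tilde O((|C_{S_i}|+\log^2|\Domain|)\log(1/\delta')/\eps^2) = \tilde O((|C|+|C_f|+\log^2|\Domain|)\log(1/\delta)/\eps^2)$ time per bucket; multiplying by $k = O(\log n/\eps)$ and adding the cheaper $\MAX$ call ($\tilde O(\log n\log(1/\delta))$ samples, $\tilde O((|C|+|C_f|)\log n\log(1/\delta))$ time by Lemma~\ref{l:primMAX}) gives $\tilde O(\log n\log(1/\delta)/\eps^3)$ conditional samples and $\tilde O((|C|+|C_f|+\log^2|\Domain|)\log n\log(1/\delta)/\eps^3)$ total time, as claimed.

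The main obstacle is the error bookkeeping rather than any deep idea: one must ensure the additive tail error is genuinely controlled by the true sum — which uses both $M^* \le \sum_{x\in\Set}f(x)$ and the choice $k = \log_{1+\eps}(n/\eps)$ — and that the two sources of multiplicative slack, the geometric bucketing and the $\SE$ estimate, compound to only $1+O(\eps)$. Everything else is routine union bounds and circuit-size accounting.
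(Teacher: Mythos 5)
Your proof follows the same route as the paper's: one $\MAX$ call, geometric bucketing into $k = O(\log n/\eps)$ sets, $\SE$ on each bucket, the split into tail plus bucketed terms, and the identical $(1+\eps)^2 \cdot (1+\eps) \le 1+O(\eps)$ error compounding. The only small refinement is that you explicitly note $M^* \le \sum_{x\in\Set} f(x)$ to absorb the additive tail into a multiplicative error (the paper compares the tail to its estimate $\SUM(C,C_f)$ rather than the true sum, which is slightly less clean but equivalent up to constants); otherwise the argument and complexity accounting are the same.
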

 
\makeatletter{}\subsection{Weighted Sampling}
\label{s:primWCOND}

  The \textit{weighted sampling} function gets as input a description $C$ of a set $S$ together with a function $f : \Domain \rightarrow [M]$ given as a circuit $C_f$. The output of the function is a point $x$ in the set $\Set = S \cap \Support$ chosen with probability proportionally to 
the value $f(x)$. Therefore, we are interested in creating an oracle $\WCOND(C, C_f)$ that outputs element $x \in \Set$ with probability
$ \frac{f(x)}{\sum_{y \in \Set} f(y)} $.

    To implement the weighted sampling in the conditional sampling model, we use a similar idea as in support estimation. First we compute $\SUM = \SUM(C, C_f)$ and then we define a random set $H$ that contains independently every element $x$ with probability
\begin{equation}
  \label{eq:primWcond1}
  \Prob[ x \in H ] = \frac{f(x)}{2 \SUM}
\end{equation}

  Let $C_H$ be the description of $H$. We will later use Corollary \ref{cor:prsc} in order to build a pseudorandom set $H'$ with small circuit description $C_{H'}$ that approximately achieves the guarantees of $C_H$.
  
Based on the random set $H$, we describe Algorithm~\ref{alg:Wcond} which performs weighted sampling according to the function $f$.

\begin{algorithm}
  \caption{\label{alg:Wcond} Sampling elements according to their weight.}
  \begin{algorithmic}[1]
    \State $selected \leftarrow \bot$
    \While{$selected = \bot$ \textbf{and} \#iterations $ \le k$ }
      \State Construct the random set $H$ and $C_H$ as described by the equation (\ref{eq:primWcond1})
      \State Check if there exists a unique point $x \in \Set$ in the set $H$.
      \If{such unique point $x$ exists}
        \State With probability $1 - \frac{f(x)}{2 \SUM}$, set $selected \leftarrow x$
      \EndIf
    \EndWhile \\
    \Return{$selected$}
  \end{algorithmic}
\end{algorithm}

  We argue the correctness of this algorithm. Given a purely random $H$, we first show that at every iteration, the probability of selecting each point $x \in \Set$ is proportional to its weight. This implies that the same will be true for the final distribution as we perform rejection sampling on $\bot$ outcomes.

  The probability that in one iteration the algorithm will return the point $x \in S$ is the probability that $x$ has been chosen in $H$ and that $|H \cap \Set| = 1$, i.e. it is the unique point of the input set $\Support$ that lies in set $S$ and was not filtered by $H$. For every $x \in \Set$, this probability is equal to
\[ \Prob[x \in H] \prod_{y \in \Set, y \neq x} \Prob[y \notin H] \cdot \Prob[\text{keep } x] = \]
\[ = \frac{f(x)}{2 \SUM} \prod_{y \in \Set, y \neq x} \left(1 - \frac{f(y)}{2 \SUM}\right) \cdot \left(1 - \frac{f(x)}{2 \SUM}\right) = \frac{f(x)}{2 \SUM} \prod_{y \in \Set} \left(1 - \frac{f(y)}{2 \SUM}\right) \]
and it is easy to see that this probability is proportional to $f(x)$ as all other terms don't depend on $x$.

We now bound the probability of \emph{selecting} a point at one iteration. This is equal to
\[ \frac{\sum_{x \in \Set} f(x)}{2 \SUM} \prod_{y \in \Set} \left(1 - \frac{f(y)}{2 \SUM}\right) \ge \frac{1}{2} \frac{1}{1 + \eps} \exp\left(- 2 \frac{\sum_{y \in S} f(y)}{2 \SUM}\right) \ge \frac{1}{2} \frac{1}{1 + \eps} \exp\left(- \frac{1}{1 - \eps}\right) \]
which is at least $1/4$ for a small enough parameter $\eps > 0$ chosen in our estimation $\SUM$ of the total sum of $f(x)$. Thus at every iteration, there is a constant probability of outputing a point. By repeating $k = \Theta(\log( 1/\delta))$ times we get that the algorithm outputs a point with probability $\delta/2$.

Summarizing, if we assume a purely random set $H$, the probability that the above procedure will fail in $O(\log( 1/\delta))$ is at most $\delta/2$ plus the probability that the 
computation of the sum will fail which we can also make to be at most $\delta/2$, for a total probability $\delta$ of failure. Since we need only a constant multiplicative approximation to the sum, using Lemma \ref{l:primSUM} the 
total number of queries that we need for the probability of failure to be at most $\delta/2$ is $\tilde O( \log n \log ( 1 / \delta ) )$. 

Since the random set $H$ can have very large description complexity, we use Corollary~\ref{cor:prsc} to generate a pseudorandom set $H'$. If we apply the corollary for error $\delta'$ we get that the total variation distance between 
the output distribution in one step when using $H'$ with the distribution when using $H$ is at most:
\[ \sum_{x \in \Set} \abs{\Prob(\Cond(C \wedge C_H) = x) - \Prob(\Cond(C \wedge C_{H'}) = x)} \le n \delta' \]
Since we make at most $k = \Theta(\log( 1/\delta))$ queries the oracle $\Cond$, we get that the total variation distance between the two output distributions is $O( n \delta' \log( 1/\delta) )$. Setting $\delta' = O( \frac{\eps} {|\Domain| \log( 1/\delta)} )$ we get that this distance is at most $\eps$. Computing the total runtime and number of samples, we obtain the following lemma.

\begin{lemma}
  \label{l:primWCOND}
    There exists a procedure $\WCOND(C, C_f)$ that takes as input the description $C$ of a set $S$ and a function $f$ given by a circuit $C_f$ and returns a point $x \in \Set$ from a probability distribution that is at most 
  $\eps$-far in total variation distance from the probability distribution that selects each element $x \in \Set$ proportionally to $f(x)$. The procedure fails with probability at most $\delta$, uses $\tilde O( \log n \log ( 1 / \delta ) )$ conditional samples and takes time 
  $\tilde O( (|C| + |C_f| + \log^2{|\Domain|}) \log n \log ( 1 / \delta ) + \log{|\Domain|} \log{(1/\eps)} \log ( 1 / \delta ) )$.
\end{lemma}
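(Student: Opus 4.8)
The plan is to prove Lemma~\ref{l:primWCOND} by combining the three ingredients already laid out in the discussion preceding it: correctness of Algorithm~\ref{alg:Wcond} under a truly random set $H$, the rejection-sampling amplification, and the replacement of $H$ by a pseudorandom set $H'$ via Corollary~\ref{cor:prsc}. First I would fix a small constant $\eps_0 > 0$ and run $\SUM = \SUM(C, C_f)$ with multiplicative error $\eps_0$ and failure probability $\delta/2$; by Lemma~\ref{l:primSUM} this costs $\tilde O(\log n \log(1/\delta))$ conditional samples (the $1/\eps_0^3$ factor is absorbed into a constant) and time $\tilde O((|C| + |C_f| + \log^2|\Domain|)\log n \log(1/\delta))$. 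Conditioned on $\SUM \in [(1-\eps_0)\Sigma, (1+\eps_0)\Sigma]$ where $\Sigma = \sum_{y \in \Set} f(y)$, the per-iteration success probability computed in the text is at least $\tfrac14$, so $k = \Theta(\log(1/\delta))$ iterations suffice to drive the ``all iterations returned $\bot$'' probability below $\delta/2$; a union bound with the $\SUM$ failure gives total failure at most $\delta$.

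Next I would nail down that, under a truly random $H$, the output distribution of Algorithm~\ref{alg:Wcond} conditioned on returning a non-$\bot$ value is \emph{exactly} the weighted distribution $x \mapsto f(x)/\Sigma$. This is immediate from the displayed computation: the probability that a single iteration outputs a fixed $x \in \Set$ equals $\tfrac{f(x)}{2\SUM}\prod_{y \in \Set}(1 - \tfrac{f(y)}{2\SUM})$, which factors as $f(x)$ times a quantity independent of $x$; since the algorithm performs independent repetitions and only stops on a non-$\bot$ outcome, the conditional law is proportional to $f(x)$, i.e.\ exactly the target. Crucially this exactness holds for \emph{every} value of $\SUM$ for which the probabilities $\tfrac{f(x)}{2\SUM}$ lie in $[0,1]$ (guaranteed since $\SUM \ge (1-\eps_0)\Sigma \ge (1-\eps_0) M$ forces $\tfrac{f(x)}{2\SUM} \le \tfrac{1}{2(1-\eps_0)} < 1$), so the only role of the accuracy of $\SUM$ is to lower-bound the per-iteration success probability.

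Then I would handle the pseudorandom substitution. Each iteration's ``check if there is a unique point of $\Set$ in $H$'' is implemented (as noted in the Point-in-Set section) with a constant number of $\Cond(C \wedge C_H)$ queries, so applying Corollary~\ref{cor:prsc} with function $g(x) = f(x)/(2\SUM)$ — described by a circuit of size $\tilde O(|C_f|)$ after dividing by the constant $2\SUM$ — and error parameter $\delta'$ yields a set $H'$ with circuit $C_{H'}$ of size $\tilde O(|C_f| + \log|\Domain|\log(1/\delta'))$, such that each conditional query's output law shifts by at most $\delta'$ per element, hence by at most $n\delta'$ in total variation per query. Over all $O(k) = O(\log(1/\delta))$ queries this accumulates to $O(n\delta'\log(1/\delta))$; setting $\delta' = \Theta\!\big(\tfrac{\eps}{|\Domain|\log(1/\delta)}\big)$ (legitimate since $n = O(|\Domain|)$) makes the total variation distance between the real output and the ideal weighted sample at most $\eps$, as claimed. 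Collecting the circuit sizes gives query circuits of size $\tilde O(|C| + |C_f| + \log^2|\Domain| + \log|\Domain|\log(1/\eps))$ (the $\log^2|\Domain|$ coming from the $\log|\Domain|\log(1/\delta')$ term with $\delta' \approx \eps/|\Domain|$), and multiplying by the query count $\tilde O(\log n \log(1/\delta))$ yields the stated time bound $\tilde O((|C| + |C_f| + \log^2|\Domain|)\log n \log(1/\delta) + \log|\Domain|\log(1/\eps)\log(1/\delta))$.

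The main obstacle I anticipate is bookkeeping the two different error budgets cleanly: the \emph{additive} total-variation error $\eps$ (controlled purely by how finely the pseudorandom generator approximates $H$, via $\delta'$) must be kept conceptually separate from the \emph{failure probability} $\delta$ (controlled by the number of rejection-sampling rounds $k$ and the $\SUM$ subroutine's reliability), and one must also be careful that the accuracy $\eps_0$ used inside $\SUM$ is a fixed constant that never interacts with the output-fidelity parameter $\eps$ — it only exists to guarantee the $\ge 1/4$ per-round success rate. A secondary subtlety is verifying that truncating $g$ to $O(\log(|\Domain|/\delta'))$-bit precision inside Corollary~\ref{cor:prsc} does not spoil the exactness-of-the-weighted-distribution argument; but since that truncation is already folded into the $\delta'$-error guarantee of the corollary, it contributes only to the $\eps$ budget and requires no separate treatment.
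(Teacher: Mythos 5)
Your proof is correct and follows essentially the same route as the paper: estimate $\SUM$ with constant accuracy, run the rejection-sampling loop with per-iteration success probability $\ge 1/4$ so that $\Theta(\log(1/\delta))$ rounds suffice, observe that under a truly random $H$ the output law conditioned on non-$\bot$ is exactly proportional to $f$, and then substitute the pseudorandom $H'$ from Corollary~\ref{cor:prsc} with $\delta' = \Theta\bigl(\eps/(|\Domain|\log(1/\delta))\bigr)$ to accumulate at most $\eps$ total-variation error over all queries. One tiny slip: you assert $\SUM \ge (1-\eps_0)\Sigma \ge (1-\eps_0) M$, but $\Sigma = \sum_{y\in\Set} f(y)$ need not be at least $M$; the bound $f(x)/(2\SUM) \le 1/(2(1-\eps_0))$ you want follows instead from the trivial $\Sigma \ge f(x)$, so the argument is unaffected.
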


\makeatletter{}\subsection{Distinct Elements Sampling -- $\ell_0$ Sampling}
\label{s:primDES}
  
The \textit{distinct elements sampling} function gets as input a description $C$ of a set $S$ together with a function $f : \Domain \rightarrow [M]$ described by a circuit $C_f$. It outputs samples from a distribution on the set $\Set = S \cap \Support$ 
 such that the distribution of values $f(x)$ is uniform over the image space $f(\Set)$. We thus want that for every $y \in f(\Set)$, 
$ \Prob[x \in f^{-1}(y)] = {|f(\Set)|}^{-1} $.

We first explain the implementation of the algorithm assuming access to true randomness. Assume therefore that we have a circuit $C_h$ that describes one purely random hash function 
  $h : [M] \rightarrow [M]$. Then $\argmax_{x \in \Set} h( f(x) )$ will produce a uniformly random element as long as the maximum element is unique. This means that if we call the procedure $\ARGMAX$ to find a point $x^* = \ARGMAX(C, C_h \circ C_f)$ and check that no point $x \in \Set$ exists such that $f(x) \neq f(x^*)$ and $h(f(x)) = h(f(x^*))$ then the result will be a point distributed with the correct distribution. Repeating $O(\log(1/\delta))$ times guarantees that we get a valid point with probability at least $1-\delta$.

Therefore the only question is how to replace $h$ with a pseudorandom $h'$. We can apply Nisan's pseudorandom generator. Consider an algorithm that for every point $y \in [M]$ in order, draws a random sample $s$ uniformly at random from $[M]$ and checks if $y \in f(\Set)$ and whether $s$ is the largest value seen so far. This algorithm computes $\argmax_{y \in P} h( y )$ while only keeping track of the largest sample $s$ and the largest point $y$. This algorithm uses $\Theta(\log M)$ bits of memory and $O(M \log M)$ random bits. Therefore we can apply Nisan's theorem (Theorem \ref{th:nisan}) for space $\Theta(\log(1/\eps))$ for $\eps > M^{-1}$ and we can replace $h$ with $h'$ that uses only $\tilde O(\log M \log(1/\eps) )$ random bits whose circuit representation is only $\tilde O( \log M \log(1/\eps) )$.
    
  This means that we can use the Lemma \ref{l:primMAX} and Theorem \ref{th:nisan} to get the following lemma about the $\ell_0$-sampling.

  \begin{lemma}
  \label{l:primDES}
      There exists a procedure $\DES(C, C_f)$ that takes as input the description $C$ of a set $S$ and a function $f : \Domain \rightarrow [M]$ given by a circuit $C_f$ and returns a point $x \in \Set$ from a probability distribution that is at
    most $\eps$-far in total variation distance (for $\eps \le M^{-1}$) from a probability distribution that assigns probability $\frac{1}{|f(\Set)|}$ to every set $f^{-1}(y)$ for $y \in f(\Set)$. This procedure fails with probability at most $\delta$, uses $O( \log n \log ( 1 / \delta ) )$ conditional samples and takes time $O((|C| + |C_f| + \log M \log(1/\eps)) \log n \log ( 1 / \delta ) )$.
  \end{lemma}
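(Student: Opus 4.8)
The plan is to establish Lemma~\ref{l:primDES} by reducing $\ell_0$-sampling to an application of the $\ARGMAX$ primitive (Lemma~\ref{l:primMAX}) composed with a pseudorandom hash function, following exactly the sketch already given in the text but making the approximation accounting precise. First I would analyze the idealized algorithm: given a truly random hash $h:[M]\to[M]$, compute $x^* = \ARGMAX(C, C_h \circ C_f)$, which returns an input point maximizing $h(f(x))$ over $x \in \Set$. Since $h$ is uniform and independent across the at most $M$ distinct values in $f(\Set)$, conditioned on the maximizer of $h$ over $f(\Set)$ being unique, the value $f(x^*)$ is uniformly distributed over $f(\Set)$; hence $x^*$ lies in $f^{-1}(y)$ with probability exactly $1/|f(\Set)|$ for each $y$. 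The maximizer is unique with probability at least $1 - |f(\Set)|/M \ge 0$, but more usefully, a collision among the two largest hash values occurs with probability $O(1/M)$ when $M$ is large; one must however be careful when $|f(\Set)|$ is comparable to $M$. To handle this robustly I would instead rehash into a larger range $[M^3]$ (so the circuit $C_h$ maps $[M] \to [M^3]$, costing only a constant factor in size), making the probability of a non-unique maximizer at most $|f(\Set)|^2/M^3 \le 1/M$; we then verify uniqueness by a single extra $\EP$-style query checking whether any $x \in \Set$ with $f(x) \ne f(x^*)$ has $h(f(x)) = h(f(x^*))$, and if it fails we repeat. Repeating $\Theta(\log(1/\delta))$ times drives the overall failure probability (from both non-uniqueness and the internal failure probability of $\MAX$) below $\delta$.

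Next I would replace the truly random $h$ by a pseudorandom $h'$ via Nisan's generator (Theorem~\ref{th:nisan}), exactly as the text indicates. The key observation is that for \emph{any fixed} input set $\Set$ and function $f$, the quantity $\ARGMAX_{x\in\Set} h(f(x))$ — or rather the induced value $f(x^*) \in [M]$ — can be computed by a small-space streaming algorithm that scans the range $[M]$ in order: for each $y \in [M]$ it draws the $\Theta(\log M)$ bits of $h(y)$, tests membership $y \in f(\Set)$ (which is a property of the fixed instance, not of the randomness), and keeps a running maximum. This uses $\Theta(\log M)$ bits of working memory and $O(M \log M)$ random bits in total, so Nisan's theorem with space parameter $S = \Theta(\log(1/\eps))$ (valid for $\eps > M^{-1}$) yields a generator $G$ with seed length $\ell = \Theta(\log(1/\eps)\log M)$ such that the output distribution of $f(x^*)$ changes by at most $2^{-S} \le \eps$ in total variation. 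Since the final sample $x^*$ is a deterministic function of $f(x^*)$ and the $\Cond$-oracle randomness used inside $\MAX$, and since total variation distance does not increase under post-processing, the output distribution of $\DES$ is within $\eps$ of the ideal $\ell_0$-distribution. The block-computability of Nisan's generator gives a circuit $C_{h'}$ of size $\tilde O(\log M \log(1/\eps))$ that computes any requested block of $h'$, so $C_{h'} \circ C_f$ has size $|C_f| + \tilde O(\log M \log(1/\eps))$.

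Finally I would assemble the complexity bounds. One run invokes $\MAX(C, C_{h'}\circ C_f)$ once plus a constant number of uniqueness-check queries, and we perform $\Theta(\log(1/\delta))$ runs, so by Lemma~\ref{l:primMAX} the total number of conditional samples is $\tilde O(\log n \log(1/\delta))$ and the running time is $\tilde O\big((|C| + |C_f| + \log M \log(1/\eps))\log n \log(1/\delta)\big)$, matching the statement. The main obstacle I anticipate is the subtlety in the uniqueness argument: one has to ensure both that the ideal distribution is \emph{exactly} uniform over the value classes $f^{-1}(y)$ (which requires conditioning on a unique hash-maximizer and arguing this conditioning is innocuous, handled by the enlarged hash range plus rejection) and that Nisan's generator is applied to the \emph{right} small-space computation — namely one whose only randomness is the hash bits, with the instance $(\Set, f)$ hard-wired — so that the $2^{-S}$ guarantee transfers to the distribution of the returned point rather than to some unrelated predicate. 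A secondary care point is the union bound over the $\Theta(\log(1/\delta))$ repetitions combining the $\MAX$ failure probability, the non-uniqueness probability, and the pseudorandomness error, which should be split so each contributes at most a constant fraction of $\delta$ (and $\eps$, respectively).
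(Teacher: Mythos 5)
Your proposal is correct and follows essentially the same route as the paper: run $\ARGMAX$ with a random hash $h$ composed with $f$, reject and repeat when the hash-maximizer is not unique, replace $h$ by a Nisan-derandomized $h'$ by observing that the $\argmax$ over $[M]$ can be computed by a $\Theta(\log M)$-space one-pass scan of the range, and assemble the bounds from Lemma~\ref{l:primMAX} and Theorem~\ref{th:nisan}. Your enlargement of the hash range to $[M^3]$ is harmless but unnecessary -- with range $[M]$ the maximizer is unique with probability $\Omega(1)$ even when $|f(\Set)|=M$, so rejection with $O(\log(1/\delta))$ repetitions already suffices, which is what the paper does; you also inherit the paper's sign slip (the space parameter $S=\Theta(\log(1/\eps))$ must dominate the scan's $\Theta(\log M)$ working memory, which is why the lemma restricts to $\eps\le M^{-1}$, not $\eps> M^{-1}$ as the proof prose says).
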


  \makeatletter{}\section{$k$-means clustering}
\label{s:kmeans}

   In this section we describe how known algorithms for the $k$-means clustering can be transformed to sublinear time algorithms in the case that we have access to conditional samples. The basic tool of
this algorithms was introduced by Arthur and Vassilvitskii \cite{arthurV07}.
  \begin{description}
    \item[$D^2$-sampling: ] This technique provides some very simple algorithms that can easily get a constant factor approximation to the optimal $k$-means clustering, like in the work of 
    Aggarwal et. al. \cite{aggarwalDK09}. Also if we allow exponential running time dependence on $k$ then we can also get a PTAS like in the work of Jaiswal et. al. \cite{jaiswal0S14}. The drawback of 
    this PTAS is that it works only for points in the $d$ dimensional euclidean space.
  \end{description}

\noindent When working in arbitrary metric spaces, inspired by Aggarwal et. al. \cite{aggarwalDK09} we use adaptive sampling to get a constant factor approximation to the $k$-means problem. 
Now we describe how all these steps can be implemented in sublinear time using the primitives that we described in the previous section.
The steps of the algorithm are:

\begin{enumerate}
  \item \textbf{Pick a set $P$ of $O(k)$ points according to $D^2$-sampling.} \\
    For $O(k)$ steps, let $P_i$ denote the set of samples that we have chosen in the $i$ step. We pick the $(i + 1)$th point according to the following distribution
    \[ \mathbb{P}(\text{probability of picking  } \vec{x}_{k}) = \frac{d^2(\vec{x}_k, P_i)}{\sum_j d^2(\vec{x}_j, P_i)} \]

    \noindent \textbf{Implementation:} To implement this step we simply use the primitive $\WCOND(C, f)$ where $C$ is the constant true circuit and $f(x) = d^2(x, P_i) = \min_{p \in P_i} d^2(x, p)$. The circuit to implement the function
    $d^2( \cdot, \cdot )$ has size $\tilde{O}(\log \abs{\Domain})$ where $\Delta$ is the diameter of the space. Now since $|C_i| \le O(k)$ we can also implement the minimum using a tournament with only $O(k)$ 
    comparisons each of which has size $O(\log \abs{\Domain})$. This means that the size of the circuit of $f$ is bounded by $|C_f| \le \tilde{O}(k \log \abs{\Domain})$. Therefore we can now use the Lemma \ref{l:primWCOND}
    and get that we need $\tilde{O}(k \log n \log ( 1 / \delta ))$ queries and running time $\tilde{O}((k \log \abs{\Domain} + \log^2 \abs{\Domain}) \log n \log (1 / \delta) + \log \abs{\Domain} \log (1/\eps) \log (1 / \delta))$ 
    to get the $O(k)$ needed samples from a distribution that is $\eps_1$ in total variation distance from the correct distribution and has probability of error $\delta$ for each sample.
  \item \textbf{Weight the points of $P$ according to the number of points that each one represents.} \\
    For any point $p \in P$ we set
    \[ w_p = \abs{\left\{ x \in \Omega \mid \forall p'( \neq p) \in P ~~ d(x, p') < d(x, p) \right\}} \]

    \noindent \textbf{Implementation:} To implement this step given the previous one we just iterate over all the points in $P$ and for each one of these points $p$ we compute the weight $w_p$ using the procedure $\SUM$ as described in
    Lemma \ref{l:primSUM}. Similarly to the previous step we have that $C$ is the constant 1 circuit and $f_p(x)$ is equal to 1 if the closest point to $x$ in $P$ is $p$ and zero otherwise. To 
    describe this function we need as before $O(k \log \abs{ \Domain })$ sized circuit. Therefore for this step we need $\tilde{O}(\log n \log(1 / \delta) / \eps_2^3)$ conditional samples and running time
    $\tilde{O}((k \log \abs{ \Domain } + \log^2 \abs{\Domain}) \log n \log (1 / \delta) / \eps_2^3)$ in order to get an $(\eps_2, \delta)-$multiplicative approximation of every $w_p$.
  \item \textbf{Solve the weighted $k$-means problem in with the weighted points of P.} \\
    This can be done using an arbitrary constant factor approximation algorithm for $k$-means since the size of $P$ is $O(k)$ and therefore the running time will be just $\poly(k)$ which is already sublinear in
    $n$.
\end{enumerate}

\noindent To prove that this algorithm gets a constant factor approximation we use Theorem 1 and Theorem 4 of \cite{aggarwalDK09}. From the Theorem 1 of \cite{aggarwalDK09} and the fact that we sample from a distribution that is 
  $\eps$-close in total variation distance to the correct one we conclude that the set $P$ that we chose satisfies Theorem 1 of \cite{aggarwalDK09} with probability of error at most $\eps_1 + O(k) \delta$. Then it is easy to 
  see at Theorem 4 of \cite{aggarwalDK09} that when we have a constant factor approximation of the weights, we lose only a constant factor in the approximation ratio. Therefore we can choose $\eps_2$ to be constant. Finally 
  for the total probability of error to be constant, we have to pick $\eps_1$ to be constant. Combining all these with the description of the step that we have above we get the following result. 

\begin{theorem}
  \label{th:constkmeans}
    There exists an algorithm that computes an $O(1)$-approximation to the $k$-means clustering and uses only $\tilde{O}(k^2 \log n \log ( k / \delta ))$ conditional queries and has running time \\
    $\tilde{O}(\poly(k) \log^2 \abs{\Domain} \log n \log ( 1 / \delta ))$.
\end{theorem}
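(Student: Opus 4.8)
I would implement the classical three-phase $D^2$-sampling coreset for $k$-means, charging each phase to the primitives of Section~\ref{s:primi} used as black boxes. \emph{Phase~1}: build a set $P$ of $m=O(k)$ points by $D^2$-sampling --- having chosen $P_i$, draw $p_{i+1}$ with probability proportional to $d^2(x,P_i)=\min_{p\in P_i}d^2(x,p)$ by one call $\WCOND(C,C_{f_i})$ with $C$ the constant-$1$ circuit and $f_i(x)=\min_{p\in P_i}d^2(x,p)$. \emph{Phase~2}: give each $p\in P$ the weight $w_p=|\{x\in\Support:p=\argmin_{p'\in P}d(x,p')\}|$, computed by one $\SUM(C,C_{f_p})$ with $f_p$ the $\{0,1\}$-indicator of ``$p$ is the closest center of $P$ to $x$''. \emph{Phase~3}: run any offline $O(1)$-approximation for weighted $k$-means on the $m=O(k)$ weighted points, which costs $\poly(k)$ time.

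For the complexity, note that on $\Domain=[\mathcal{D}]^d$ the squared distance $d^2(\cdot,p)$ has a circuit of size $\tilde O(\log|\Domain|)$ and an $O(k)$-leaf tournament computes the minimum over $P_i$ (resp.\ $P$), so $|C_{f_i}|,|C_{f_p}|=\tilde O(k\log|\Domain|)$ and the values lie in $[M]$ with $M=\poly(|\Domain|)$. Setting the failure probability of each primitive call to $\delta'=\Theta(\delta/k)$ (to survive a union bound over the $O(k)$ calls), the accuracy of the Phase~2 $\SUM$ calls to a small constant $\eps_2$, and the total-variation parameter $\eps_1$ of the $\WCOND$ calls to a small constant, and substituting into Lemmas~\ref{l:primWCOND} and~\ref{l:primSUM}, then summing over the $O(k)$ calls of each phase and adding the $\poly(k)$ of Phase~3, yields the stated $\tilde O(k^2\log n\log(k/\delta))$ queries and $\tilde O(\poly(k)\log^2|\Domain|\log n\log(1/\delta))$ running time.

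For correctness, each successful $\WCOND$ call returns a sample within $\eps_1$ in total variation of the true $D^2$-distribution, so by a hybrid/coupling argument over the $m=O(k)$ adaptive rounds the realized $P$ can be coupled with a genuine $D^2$-sampled set, agreeing except on an event of probability $O(m\eps_1)+O(m\delta')=O(k\eps_1)+O(\delta)$. On its complement, Theorem~1 of~\cite{aggarwalDK09} guarantees that $P$ with its representative weights is a coreset on which a constant-factor-optimal weighted $k$-means solution is an $O(1)$-approximation for $\Support$, while Theorem~4 of~\cite{aggarwalDK09} shows that knowing the weights only up to a factor $(1\pm\eps_2)$ costs a further constant factor only; hence a constant $\eps_2$ and a sufficiently small constant $\eps_1$ make the overall failure probability a small constant, giving an $O(1)$-approximation.

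The main obstacle is this error accounting across the adaptive phase: because $p_{i+1}$ depends on all earlier centers, the per-round total-variation errors and failure probabilities are not independent, and one must check that the guarantees of~\cite{aggarwalDK09} degrade only gracefully under both an approximate sampling distribution and approximate representative weights. A secondary point is to verify that every circuit produced by the reduction ($d^2(\cdot,p)$, the $O(k)$-way minimum, the indicators $f_p$) stays of size $\poly(k,\log|\Domain|)$, since the primitives' running-time bounds are stated in those terms.
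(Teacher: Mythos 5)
Your proposal is correct and follows essentially the same route as the paper: $D^2$-sampling via $\WCOND$, representative weights via $\SUM$, an offline weighted $k$-means solver on the $O(k)$ points, circuit-size bookkeeping $\tilde O(k\log|\Domain|)$ for the tournament minimum, a union bound with per-call failure $\delta/k$, and correctness via Theorems~1 and~4 of~\cite{aggarwalDK09} under approximate sampling and approximate weights. The coupling argument you flag for the adaptive rounds is exactly the (terse) step the paper uses when it charges the error $\eps_1+O(k)\delta$ to the approximate $D^2$-distribution, so there is no gap.
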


\noindent \textbf{Remark 1.} The above algorithm could be extended to an arbitrary metric space where we are given a circuit $C_d$ that describes the distance metric function. In this case the running time will also depend on $\abs{C_d}$. \\

\noindent \textbf{Remark 2.} In this case that the points belong to $d$ dimensional space, we can also use the \textbf{Find-k-means($\Support$)} algorithm by \cite{jaiswal0S14} to get $(1 + \eps)$-approximation instead of constant. 
This algorithm iterates over a number of subsets of the input of specific size that have been selected using $D^2$-sampling. Then from all these different solution it selects the one with the minimum cost. We can implement this 
algorithm using our $\WCOND$ and $\SUM$ primitives to get a sublinear $(\eps, \delta)$-multiplicative approximation algorithm that uses 
$\tilde{O}\left(2^{\tilde{O}(k^2 / \eps)} \cdot \log n \cdot \log (1/\delta)\right)$ conditional samples and has running time
$\tilde{O}\left(2^{\tilde{O}(k^2 / \eps)} \cdot \log^2 \abs{ \Domain } \cdot \log n \cdot \log (1/\delta)\right)$.

  \makeatletter{}\section{Euclidean Minimum Spanning Tree}
\label{s:mst}

In this section we are going to discuss how to use the primitives that we described earlier in order to estimate the weight of the minimum spanning tree of $n$ points in euclidean space. 

More specifically, suppose that we have a $d$-dimensional discrete euclidean space $\Domain = \{1,\dots , \Delta \}^d$ and a set of $n$ points $\Support =\{x_1,\dots ,x_n\}$, where $x_i \in \Domain$. We assume that $\Delta=poly(n)$ which is a reasonable assumption to make when bounded precision arithmetic is used. This means that each coordinate of our points can be specified using $O(\log n)$ bits. 

In what follows, we are going to be using the following formula that relates the weight of an MST to the number of connected components of certain graphs. Let $W$ denote the maximum distance between any pair of points in $\Domain$. Moreover, let $G^{(i)}$ be the graph whose vertices correspond to points in $\Support$ and the is an edge between two vertices if and only if the distance of the corresponding points is at most $(1+\varepsilon)^i$. By $c_i$ we denote the number of connected components of the graph $G^{(i)}$. In \cite{CS04}, it is shown that the following quantity leads to an $(1+\varepsilon)$-multiplicative approximation of the weight of the minimum spanning tree:    

\begin{equation}\label{MST_approx}
n-W + \varepsilon\cdot\sum_{i=0}^{\log_{1+\varepsilon}W-1} (1+\varepsilon)^i \cdot c_i
\end{equation}
The quantity would be equal to the weight of the minimum spanning tree if all pairwise distances between points were $(1+\eps)^i$ for some $i\in \mathbb{N}$. 

In order to estimate the weight of the MST, we need to estimate the number of connected components $c_i$ for each graph $G^{(i)}$. As shown in~\cite{FIS08}, for every $i$, we can equivalently focus on performing the estimation task after rounding the coordinates of all points to an arbitrary grid of size $\eps (1+\eps)^i / \sqrt{d}$. This introduces a multiplicative error of at most $1+O(\eps)$ which we can ignore by scaling $\eps$ by a constant.

We thus assume that every point is at a center of a grid cell when performing our estimation. We perform a sampling process which 
samples uniformly from the occupied grid cells (regardless of the number of points in each of them) and estimates the number of cells covered by the connected component $j$ of $G^{(i)}$ that the sampled cell belongs to. Comparing that estimate to an estimate for the total number of occupied grid cells, we obtain an estimate for the total number of connected components. In more detail, if the sampled component covers a $\rho$ fraction of the cells, the guess for the number of components is $\frac 1 \rho$. For that estimator to be accurate, we need to make sure that the total expected number of occupied grid cells is comparable to the total number of components without blowing up in size exponentially with the dimension $d$ of the Euclidean space. We achieve this by choosing a uniformly random shifted grid. This random shift helps us avoid corner cases where a very small component spans a very large number of cells even when all its contained points are very close together. With a random shift, such cases have negligible probability.

We will first use the following lemma to get an upper bound on the number of occupied cells which holds with high probability:

\begin{lemma}\label{lem:minkowski}
Let $C\subseteq \mathbb{R}^d$ be a $1$-D curve of length $L$, a grid $G\subseteq \mathbb{R}^d$ of side length $R$ and random vector $\vec{v}$ distributed uniformly over $[0,R]^d$. Then, the expected number of grid cells of $G+\vec{v}$ that contain some point of the curve is $\frac{vol([0,R]^d+C)}{R^d}$, where "+" denotes the Minkowski sum of the two sets.
\end{lemma}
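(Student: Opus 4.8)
The plan is to compute the expected number of occupied cells of the randomly shifted grid $G + \vec{v}$ directly by linearity of expectation over the cells of $G$, and to recognize that the resulting sum is a Riemann-type expression for a volume. Fix the deterministic grid $G$ with side length $R$; as $\vec{v}$ ranges uniformly over $[0,R]^d$, the shifted grid $G+\vec{v}$ has exactly one cell whose ``anchor'' (say, lower corner) lies in any given translate of $[0,R)^d$. So instead of summing over the infinitely many cells of the shifted grid, I would parametrize a cell by the position $\vec{u} \in [0,R)^d$ of its anchor modulo the lattice $R\mathbb{Z}^d$, and integrate over $\vec{u}$.

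**The key computation.**

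For a fixed anchor offset $\vec{u}$, consider the lattice of cells $\{ \vec{u} + R\vec{k} + [0,R]^d : \vec{k} \in \mathbb{Z}^d\}$. The number of these cells that intersect the curve $C$ equals the number of lattice points $R\vec{k}$ such that $(\vec{u} + R\vec{k} + [0,R]^d) \cap C \neq \varnothing$, which is the number of $\vec{k}$ with $R\vec{k} \in -\vec{u} + (C \oplus (-[0,R]^d))$, where $\oplus$ denotes Minkowski sum (and $-[0,R]^d$ is just $[-R,0]^d$, a translate of the cube). Averaging this count over $\vec{u} \in [0,R)^d$: summing over $\vec{k}$ the indicator that $R\vec{k} + \vec{u} \in C \oplus [-R,0]^d$ and integrating $\vec{u}$ over the fundamental domain $[0,R)^d$ exactly tiles $\mathbb{R}^d$, so the average equals $\mathrm{vol}(C \oplus [-R,0]^d)/R^d$. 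Since $[-R,0]^d$ and $[0,R]^d$ are translates of one another and Minkowski sum is translation-covariant, $\mathrm{vol}(C \oplus [-R,0]^d) = \mathrm{vol}(C \oplus [0,R]^d) = \mathrm{vol}([0,R]^d + C)$, giving the stated expectation $\frac{\mathrm{vol}([0,R]^d + C)}{R^d}$. The length hypothesis $L$ does not enter the identity itself; it will be used by the caller together with the Minkowski-sum / tube-volume bound $\mathrm{vol}([0,R]^d + C) \le \mathrm{vol}([0,R]^d) + (\text{perimeter terms}) \le R^d + O(\sqrt d\, R^{d-1} L)$ or similar, but that estimate is separate from this lemma.

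**The main obstacle.**

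The only delicate point is the bookkeeping that turns ``average over $\vec{v}$ of the number of occupied shifted cells'' into ``integral of a lattice-sum of indicators over one fundamental domain,'' i.e. justifying that as $\vec v$ ranges over $[0,R]^d$ one sweeps each cell-position exactly once and that Tonelli's theorem applies to interchange the (countable) sum over $\vec k$ with the integral over $\vec u$; this is legitimate because all terms are nonnegative and, $C$ being bounded (a curve of finite length), only finitely many $\vec k$ contribute. I would also be slightly careful about boundary cases (the curve touching a cell only on a face or corner, sets of measure zero in $\vec v$) which do not affect the expectation. Once the sum-integral interchange is set up, the identification with $\mathrm{vol}([0,R]^d + C)/R^d$ is immediate from the translation-invariance of Lebesgue measure, so I expect essentially no computational grind — the content is entirely in choosing the right parametrization of the shifted cells.
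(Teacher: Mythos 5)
Your proposal is correct and takes essentially the same approach as the paper: both reduce the expected count to the volume of the Minkowski sum $C + [-R,0]^d$ by observing that a shifted cell anchored at $\vec z + \vec v$ intersects $C$ iff $\vec z + \vec v \in C + [-R,0]^d$, and then use the fact that as the anchor sweeps over the disjoint translates of $[0,R]^d$ indexed by the lattice, the indicators tile $\mathbb{R}^d$. The paper phrases this as linearity of expectation over indicator variables $I_{\vec z}$ for the original grid points $\vec z \in G$, while you phrase it as Tonelli applied to a lattice sum integrated over a fundamental domain; these are the same computation.
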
 

\begin{proof}
  Consider the grid $G=\{R \vec z: \vec z \in \mathbb{Z}^d\} \subseteq \mathbb{R}^d$ shifted by a random vector $\vec v \in [0,R]^d$ to obtain a grid $G_v = \vec v + G$. We associate every point $\vec z \in G_v$ with a cell $\vec z + [0,R]^d$. Observe that a cell corresponding to a grid point $\vec z$ intersects the curve $C$ if $(\vec z + [0,R]^d) \cap C \neq \emptyset$, or equivalently if $\vec z \in C + [-R,0]^d$. The expected number of occupied grid cells is thus equal to the expected number of grid points of $G_v$, which lie in the Minkowski sum of $C$ and $[-R,0]^d$.
  
Note that each of the original grid points $\vec z \in G$ can move inside a $d$-dimensional hypercube of side length $R$ and all those hypercubes are pairwise disjoint and span the whole $d$-dimensional space. 

Now let $I_{\vec z}$ be the indicator random variable for the event that $\vec z \in C+[-R,0]^d$. Clearly,
\[
\mathbb{E}[I_{\vec z}]=\Pr[\vec z \in C+[-R,0]^d]=\frac{vol((C+[-R,0]^d) \cap (\vec z +[0,R]^d))}{R^d}
\]

So the expected number of points in $C+[-R,0]^d$ is:
\[
\mathbb{E}[\sharp points]=\mathbb{E}\left[\sum_{\vec z \in G} I_{\vec z}\right]=\frac{vol(C+[-R,0]^d)}{R^d}  =\frac{vol(C+[0,R]^d)}{R^d}
\]
  
\end{proof}

  The following lemma bounds the volume of the required Minkowski sum.

\begin{lemma}\label{lem:minkowskibound}
Let $C\subseteq \mathbb{R}^d$ be a $1$-D curve of length $L$. The volume of the Minkowski sum $C+[0,R]^d$ is at most $R^d+\sqrt{d}\cdot L$.
\end{lemma}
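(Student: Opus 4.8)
The plan is to add up the contribution of the curve segment by segment, after first reducing to the case where $C$ is a polygonal path — which is anyway the only case needed for the MST application, where $C$ is a traversal of a spanning tree of a connected component. (For a general rectifiable $C$ of length $L$, one inscribes polygonal paths $P_n$ of vanishing mesh; these have length at most $L$ and satisfy $C \subseteq P_n + \bar B(0,\epsilon_n)$ with $\epsilon_n \to 0$, so $\mathrm{vol}(C+[0,R]^d) \le \limsup_n \mathrm{vol}\big(P_n+[0,R]^d+\bar B(0,\epsilon_n)\big)$ and the fattening costs a vanishing amount.) So write $C = \sigma_1 \cup \cdots \cup \sigma_m$ as consecutive segments, with $\sigma_j$ of length $\ell_j$ and unit direction $u_j$ and $\sum_j \ell_j \le L$. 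Processing the segments in order, the conceptual crux is that two consecutive sets $\sigma_j+[0,R]^d$ and $\sigma_{j+1}+[0,R]^d$ always share a full translate of the cube $[0,R]^d$ (the one sitting at their common endpoint), so their overlap has volume at least $R^d$; applying $\mathrm{vol}(A\cup B) \le \mathrm{vol}(A)+\mathrm{vol}(B)-\mathrm{vol}(A\cap B)$ at each step and telescoping yields
\[
  \mathrm{vol}\big(C+[0,R]^d\big) \;\le\; R^d + \sum_{j=1}^{m} \Big( \mathrm{vol}\big(\sigma_j+[0,R]^d\big) - R^d \Big).
\]

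It then remains to bound $\mathrm{vol}(\sigma_j+[0,R]^d) - R^d$ for a single segment. Since $\sigma_j+[0,R]^d$ is convex, in coordinates adapted to $u_j$ (one axis along $u_j$, the other $d-1$ spanning $u_j^\perp$) it is the region lying between two graphs over the shadow $\pi_{u_j^\perp}([0,R]^d)$; appending the segment $[0,\ell_j]u_j$, which is parallel to $u_j$, leaves the shadow unchanged and merely raises the top graph by $\ell_j$. Integrating over the shadow gives the exact identity $\mathrm{vol}(\sigma_j+[0,R]^d) = R^d + \ell_j \cdot \mathrm{vol}_{d-1}\big(\pi_{u_j^\perp}([0,R]^d)\big)$. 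A direct computation of the cube's shadow gives $\mathrm{vol}_{d-1}\big(\pi_{u_j^\perp}([0,R]^d)\big) = R^{d-1}\|u_j\|_1$, and by Cauchy--Schwarz $\|u_j\|_1 \le \sqrt{d}\,\|u_j\|_2 = \sqrt{d}$. Hence each term is at most $\sqrt{d}\,R^{d-1}\ell_j$, and summing,
\[
  \mathrm{vol}\big(C+[0,R]^d\big) \;\le\; R^d + \sqrt{d}\,R^{d-1}\sum_{j=1}^{m}\ell_j \;\le\; R^d + \sqrt{d}\,R^{d-1} L,
\]
which is exactly the stated bound $R^d + \sqrt{d}\cdot L$ whenever $R \le 1$, in particular after rescaling space so that the grid cells have unit side.

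The ingredients here — the exact Minkowski-volume formula for a convex body plus a segment, the shadow computation for the cube, and Cauchy--Schwarz — are all elementary, and the real content is the per-segment telescoping, which hinges on the observation that consecutive cube-translates share a whole cube (without this, summing would cost an unnecessary factor of $m$). The only point that genuinely needs care is the reduction from arbitrary rectifiable curves to polygonal ones, i.e. checking that the fattened approximants do not lose volume in the limit; since the curves that arise in the algorithm are polygonal, this can be relegated to a remark.
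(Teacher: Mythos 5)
Your argument is correct and conceptually parallel to the paper's proof: both hinge on the same key computation, that the volume contributed per unit length of curve in direction $u$ is $R^{d-1}\|u\|_1$ (the $(d{-}1)$-volume of the cube's shadow on $u^\perp$), bounded by $\sqrt{d}\,R^{d-1}$ via $\|u\|_1 \le \sqrt{d}\,\|u\|_2$. Where you differ is in how this per-unit contribution is assembled into a bound on the full Minkowski sum. The paper presents an informal ``swept volume'' argument, integrating $R^{d-1}\|\dot c\|_1\,dL$ along the curve and asserting the result is an upper bound in the general case without explicitly addressing how overlaps are accounted for. You instead discretize into segments and handle overlaps rigorously by observing that consecutive cube-slabs $\sigma_j + [0,R]^d$ and $\sigma_{j+1} + [0,R]^d$ share a full cube translate of volume $R^d$ at the common endpoint, so the telescoping inclusion--exclusion gives $\mathrm{vol}(C+[0,R]^d) \le R^d + \sum_j(\mathrm{vol}(\sigma_j + [0,R]^d) - R^d)$ cleanly; the paper's version is shorter but leaves the double-counting issue implicit. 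You also correctly flag that the lemma as stated is missing a factor of $R^{d-1}$: both your computation and the paper's own proof give $R^d + \sqrt{d}\,R^{d-1} L$, and the paper uses that corrected form in the subsequent application to the MST's Euler tour, so the stated bound $R^d + \sqrt{d}\,L$ is a typo (or tacitly assumes $R=1$ after rescaling), not an error in the argument.
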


\begin{proof}
We can think of the Minkowski sum as the set of points spanned  by the $d$-dimensional hypercube $[0,R]^d$ as it travels along the curve $C$. Now suppose that we move the hypercube for a very small distance $dL$ along an arbitrary unit vector $\vec{r}=(a_1,\dots, a_d)$ with positive coordinates (we assume this wlog since all other cases are symmetric). Also, let $e_1,\dots ,e_d$ be the standard basis vectors (i.e $e_i=(0,\dots,0,1,0\dots,0)$ where the $"1"$ is at the $i$-th coordinate). Note that each of those vectors is orthogonal to a facet of the hypercube and the total volume spanned by each facet during the movement is equal to the absolute value of the inner product $\vec{r}\cdot\vec{e_i}$ scaled by $dL$ where $e_i$ is the standard basis vector orthogonal to that facet.   

The volume spanned by this displacement is equal to the sum of the volumes spanned by each of the facets and is given by the following formula:
\[
dV= dL\cdot R^{d-1} \cdot \sum_{i=1}^d  \vec{r}\cdot \vec{e_i} =   dL\cdot R^{d-1}\cdot \sum_{i=1}^d  \vert a_i \vert = dL \cdot R^{d-1}\cdot \Vert r \Vert_1
\] 
  
So in the worst case the curve $C$ is a straight line segment along the all ones unit vector $\vec{w}=\frac{1}{\sqrt{d}}(1,1,\dots , 1)$, since this is the unit vector that has the maximum $l_1$ norm.

In this case, the total volume spanned during the movement along $C$ is:

\[
V_C= L \cdot R^{d-1}\cdot \Vert r \Vert_1 = \sqrt{d}\cdot R^{d-1}\cdot L
\] 

So, the volume of the Minkowski sum $C+[0,R]^d$ is: 
\[
vol(C+[0,R]^d)= R^d + V_C = R^d+  \sqrt{d}\cdot R^{d-1}\cdot L
\] 
and this is an upper bound for the general case. 
\end{proof}

  We can view the minimum spanning tree as a 1-D curve considering its Euler tour. The length of this Euler tour is $2\cdot MST$ since each edge is traversed exactly twice. For the same reason, each point in the Minkowski sum is "covered by" at least two points in the curve. So, effectively, the length of the curve can be divided by $2$.
Thus, the volume of the Minkowski sum $T + [0,R]^d$ is at most $R^d + \sqrt{d}\cdot R^{d-1}\cdot MST$. Therefore, by lemma \ref{lem:minkowski}, we get that:
\[
\mathbb{E}[\sharp cells]=1+\frac{\sqrt{d}\cdot MST}{R}
\]

\iffalse
Since our techniques involve the use of a randomly shifted  rectangular grid of side length $R$, we will need the following lemma which shows that the MST occupies a polynomial in $d$ number of grid cells even though this number is exponential in $d$ in the worst case. This will help us get a polynomial dependence in $d$ for both the running time and the number of conditional queries.

\begin{lemma}\label{exp_grid_cells}
Let $T\subseteq \mathbb{R}^d$ be a tree in which the sum of all edge lengths is denoted by $MST$. Also, consider a rectangular grid of side length $R$ in $\mathbb{R}^d$ on which a uniformly random grid has been applied (i.e the probability density for being point of the grid is equal at every point in $\mathbb{R}^d$). Then, the expected number of grid cells that contain some point of the tree is at most $1+\frac{\sqrt{d}\cdot MST}{R}$.
\end{lemma}
\fi

\iffalse
\begin{proof}
The length of an Euler tour of the minimum spanning tree $2\cdot MST$ since each edge is traversed exactly twice. For the same reason, each point in the Minkowski sum is "covered by" at least two points in the curve. So, effectively, the length of the curve can be divided by $2$.
Thus, the volume of the Minkowski sum $T+[0,R]^d$ is at most $R^d+ \sqrt{d}\cdot R^{d-1}\cdot MST$.  

By lemma \ref{lem:minkowskibound}, we get that:
\[
\mathbb{E}[\sharp points]=1+\frac{\sqrt{d}\cdot MST}{R}
\]
\end{proof}
\fi

Using Markov's inequality, we get that 
\[
\Pr\left[ \sharp cells >2\cdot (1+\frac{\sqrt{d}\cdot MST}{R})\right]< \frac12
\]

Finally, we can use our support estimation primitive of Lemma~\ref{l:primSE} to estimate the number of occupied grid cells after a random shift which enables us to amplify the success probability to $1-\delta$ by picking the random shift with the smallest number of cells after $O(\log (1/\delta))$ repetitions. 

We immediately get the following corollary:

\begin{corollary}
    We can find a grid of side length $R$, such that the number of grid cells that contain points is at most $2 \cdot (1+\frac{\sqrt{d}\cdot MST}{R})$ using $\tilde{O}(\log \log n \log^2 (1/\delta) )$ conditional samples while the failure probability is at most $\delta$. The total running time is $\tilde{O}(\log^2 n \log^2 (1/\delta) )$.
\end{corollary}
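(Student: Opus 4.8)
The plan is to promote the probabilistic bound derived just above — namely $\Exp_{\vec v}[\#\text{occupied cells}] \le 1 + \sqrt{d}\cdot MST/R$, which follows by viewing the optimal tree through its Euler tour (a $1$-D curve of length $2\cdot MST$, with every Minkowski-sum point covered at least twice) and combining Lemmas~\ref{lem:minkowski} and~\ref{lem:minkowskibound} — into a \emph{constructive} procedure that actually exhibits a good shift with probability $1-\delta$. By Markov's inequality a single uniformly random shift $\vec v\in[0,R]^d$ already yields at most $2\,(1+\sqrt{d}\cdot MST/R)$ occupied cells with probability strictly above $1/2$, so all that remains is (a) amplifying this constant success probability and (b) coping with the fact that the algorithm cannot read off the number of occupied cells exactly but only estimate it via conditional sampling.

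Concretely I would: (i) draw $t=\Theta(\log(1/\delta))$ independent random shifts $\vec v_1,\dots,\vec v_t$; (ii) for each $j$ build a circuit $C_{g_j}$ that maps $x\in\Domain$ to the index of the cell of the shifted grid $G+\vec v_j$ containing $x$ — a coordinatewise floor computation of size $\tilde O(d\log\Delta)=\tilde O(\log n)$ since $\Delta=\poly(n)$; (iii) estimate the number of distinct occupied cells $\hat m_j$ by applying the support-estimation primitive of Lemma~\ref{l:primSE} (equivalently the distinct-values primitive of Lemma~\ref{l:primDV}) to $g_j$ with a small \emph{constant} accuracy $\eps$ and failure probability $\delta'=\delta/(2t)$; and (iv) output the shift $\vec v_{\hat\jmath}$ with $\hat\jmath=\argmin_j \hat m_j$.

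For correctness I would run two union bounds against the true counts $m_j$. First, with probability $\ge 1-\delta/2$ all $t$ estimates are good: $(1-\eps)m_j\le \hat m_j\le(1+\eps)m_j$. Second, apply Markov's inequality with the slightly shrunk threshold $\tau:=\tfrac{1-\eps}{1+\eps}\cdot 2\,(1+\sqrt{d}\cdot MST/R)$; each shift then satisfies $m_j\le\tau$ with probability $\ge 1-\tfrac{1+\eps}{2(1-\eps)}$, a constant bounded away from $0$ for small enough $\eps$, so with probability $\ge 1-\delta/2$ some $\vec v_{j^*}$ ($j^*\in[t]$) has $m_{j^*}\le\tau$. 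On the intersection of these events, $\hat m_{\hat\jmath}\le \hat m_{j^*}\le(1+\eps)\tau=(1-\eps)\cdot 2\,(1+\sqrt{d}\cdot MST/R)$, hence $m_{\hat\jmath}\le \hat m_{\hat\jmath}/(1-\eps)\le 2\,(1+\sqrt{d}\cdot MST/R)$, exactly the claimed bound, with total failure probability $\le\delta$. The cost is $t$ calls to Lemma~\ref{l:primSE} with constant $\eps$: $t\cdot\tilde O(\log\log n\cdot\log(1/\delta'))=\tilde O(\log\log n\cdot\log^2(1/\delta))$ conditional samples, and time $t\cdot\tilde O\big((|C_{g_j}|+\log^2|\Domain|)\log(1/\delta')\big)=\tilde O(\log^2 n\cdot\log^2(1/\delta))$ after absorbing the $\poly(d)$ factors into $\tilde O(\cdot)$.

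The main obstacle — really the only point beyond bookkeeping — is reconciling step (iv) with the Markov guarantee: since the algorithm compares only the estimates $\hat m_j$, taking $\argmin_j\hat m_j$ need not select a shift with small \emph{true} count unless the multiplicative slack of Lemma~\ref{l:primSE} is folded back into the Markov threshold, which is precisely why we run Markov at $\tau$ rather than at $2\,(1+\sqrt{d}\cdot MST/R)$ and require $\eps$ to be a sufficiently small constant so the per-shift Markov success probability stays bounded away from $0$. A secondary item to check carefully is that the shifted-grid cell-index circuit $C_{g_j}$ has size $\tilde O(\log n)$ (polynomial in $d$), so that plugging into Lemma~\ref{l:primSE} indeed gives the stated running time.
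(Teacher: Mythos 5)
Your proof is correct and follows essentially the same approach as the paper: draw $\Theta(\log(1/\delta))$ independent random shifts, estimate each shift's occupied-cell count via the support-estimation / distinct-values primitive with failure probability $\delta/\Theta(\log(1/\delta))$ per call, and output the shift with the smallest estimate. Your explicit handling of the interaction between the multiplicative estimation error of Lemma~\ref{l:primSE} and the Markov threshold (shrinking the threshold to $\tau=\tfrac{1-\eps}{1+\eps}\cdot 2(1+\sqrt{d}\cdot MST/R)$ so that $\argmin_j \hat m_j$ provably yields a shift with true count below the stated bound) is more careful than the paper's one-sentence sketch, but the underlying argument and cost accounting are the same.
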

\subsection{Computing the size of small connected components}

As we have said earlier, we will use~\eqref{MST_approx} in order to estimate the weight of the MST. For every $i$, we estimate the number of connected components $c_i$ in the graph $G^{(i)}$ assuming that the points are in the center of a given grid with side length $R = \eps (1+\eps)^i / \sqrt{d}$ and that the total number of grid cells is at most $2 \cdot (1+\frac{\sqrt{d}\cdot MST}{ R }) = 2 \cdot (1+\frac{d \cdot MST}{ \eps (1+\eps)^i })$.
For that purpose, we will sample grid cells uniformly at random and estimate the size of the connected  that we hit during our sampling procedure. 

In order to do this, we first sample a uniformly random grid cell using the Distinct Elements Sampling procedure of Lemma~\ref{l:primDES} and then perform a BFS-like search starting from that cell to count the number of cells in that connected component. More specifically, at every iteration, we ask for a uniformly random cell that is adjacent in $G^{(i)}$ to one of the cells we have already visited, using our conditional sampling oracle. If we visit more than $t$ distinct cells (for some threshold $t$) during this search, we will stop and output that the connected component is ``big". Otherwise, we stop when we completely explore the connected component and output its size. Since there cannot be too many ``big" connected components, ignoring them is not going to affect our final estimate too much. More specifically, we will set $t=\frac{d \log_{1+\eps} W}{\varepsilon }$ and note that there can be at most $ \sharp cells / t$ ``big" connected components. 
So, by ignoring them we introduce at an additive error of at most $\sharp cells / t$.   

\subsection{Algorithm for estimating the number of connected components}

Now, we can continue with the main part of the algorithm which shows how to estimate the number of connected components of the graph $G^{(i)}$.  

Let $s_{1},\dots, s_{k}$ be the number of cells each of the $k$ connected components of $G^{(i)}$ occupies respectively and $X$ be the random variable for the index of the connected component our sample hits (in number of cells). Also, let $S$ be the total number of occupied grid cells, and $\hat{S}$ be our estimate for that using the $\SE$ primitive from Lemma~\ref{l:primSE}. 

\begin{algorithm}
  \caption{\label{alg:NumComponents} Estimating $c_i$}
  \begin{algorithmic}[1]
    \State $x_0 \leftarrow$ uniformly random occupied cell using $l_0$ sampling. 
      \State $U \leftarrow \{x_0\}$
      \State $s\leftarrow 1$
      \While {$(C_U\not\equiv 0) \wedge (s<\frac{1}{\varepsilon})$}
      \State $x \leftarrow COND(C_U)$, where the circuit $C_U$ requires the output to have a neighbor in he set $U$. 
      \State $U \leftarrow U \cup \{x\}$ 
      \State $s\leftarrow s+1$ 
      \EndWhile
      \If {$s<t=\frac{d \log_{1+\eps} W}{\varepsilon }$}
      \State{\Return{$\hat{c}_i=\frac{SE(C)}{s}$}}
      \Else 
      \State{\Return{$\hat{c}_i=1$}}
      \EndIf
  \end{algorithmic}
\end{algorithm}

Our estimator which comes from Algorithm~\ref{alg:NumComponents} is the following:
\[
\sum_{j=1}^k \mathbb{I}[X=j]\frac{\hat S}{\hat{s}_{j}}
\]  
where $\hat{s}_{j}=s_{j}$ when $s_{j}<t$ and $\hat{s}_{j} = \hat S$ otherwise (i.e $\hat{c}_i=1$).

Note that $\hat{s}_{j}$ is always overestimating the true value of $s_{j}$ and an lower bound for this expectation is the same sum if we exclude the ``big" connected components. This means that:
\[
\sum_{\substack{j=1 \\ s_{j}<t}}^k \Pr[X=j]\frac{\mathbb{E}[\hat{S}]}{s_{j}} \leq \mathbb{E}[\hat{c}_i]\leq \sum_{\substack{j=1 \\ s_{j}<t}}^k \Pr[X=j]\frac{\mathbb{E}[\hat{S}]}{s_{j}} + \left\vert \left\{j:s_{j}>t\right\} \right\vert \Leftrightarrow
\]  
\[
\left\vert \left\{j:s_{j}\leq t\right\} \right\vert \cdot \frac{\mathbb{E}[\hat{S}]}{S}\leq \mathbb{E}[\hat{c}_i] \leq \left\vert \left\{j:s_{j}\leq t\right\} \right\vert \cdot \frac{\mathbb{E}[\hat{S}]}{S} + \left\vert \left\{j:s_{j}>t\right\} \right\vert
\]  
where we substituted $Pr[X = j] = \left\vert \left\{j:s_{j}\leq t\right\} \right\vert / S$ since every component is selected with probability proportional to the number of cells it contains.

From the $\SE$ primitive, we have that 
\[
\Pr[\vert S-\hat{S} \vert > \varepsilon \cdot S]<\delta
\]

So, by conditioning on the support estimation procedure succeeding (which happens with probability $1-\delta$) we get that: 

\[
\left\vert \left\{j:s_{j}\leq t\right\} \right\vert \cdot (1-\varepsilon)
\leq \mathbb{E}[\hat{c}_i] 
\leq \left\vert \left\{j:s_{j}\leq t\right\} \right\vert \cdot (1+\varepsilon) + \left\vert \left\{j:s_{j}>t\right\} \right\vert
\]  
 
Thus, we have that $\mathbb{E}[\hat{c}_i]$, is an accurate approximation of $c_i$ with probability at least $1-\delta$.
In particular, since $\left\vert \left\{j:s_{j}>t\right\} \right\vert < S/t$ we get:
\[
(1-\eps) c_i - S/t
\leq \mathbb{E}[\hat{c}_i] \leq (1 + \eps) c_i + S/t
\]
with probability at least $1-\delta$.

We are going to repeat the above estimation $m$ times independently and keep the average which, as we are going to show, will be very well concentrated around its mean $\mu_i=\mathbb{E}[\hat{c}_i]$. To show that, we can use Hoefding's inequality since our trials are independent and trivially the value of each individual estimate is lower and upper bounded by $1$ and $S$ respectively.   

Let $\hat \mu_i$ denote the estimated average. From Hoefding's inequality we get:
\[
\Pr[\vert\hat{\mu}_i-\mu_i\vert > S/t]< e^{\frac{-2 m^2 \frac {S^2}{t^2} S^2 }{m\cdot S^2}}=e^{- 2 m / t^2 }
\]

If we set $m=O(t^2)$ we get the above guarantee with probability at least $1-\delta$. This means that with probability $1-\delta$ we get:
\[
(1-\eps) c_i - 2 S/t
\leq \hat \mu_i \leq (1 + \eps) c_i + 2 S/t
\]

This means that $\hat \mu_i = (1\pm\eps) c_i \pm 4 \cdot (1+\frac{d \cdot MST}{ \eps (1+\eps)^i }) / t$
and applying it to equation \eqref{MST_approx}, we get the following estimator for the weight of the MST:
 \[
 \widehat{MST}=n-W + \varepsilon\cdot\sum_{i=0}^{l=\log_{1+\varepsilon}W-1} (1+\varepsilon)^i \cdot \hat{\mu}_i = (1 \pm \eps) MST \pm  4 \eps \sum_{i=0}^{l=\log_{1+\varepsilon}W-1} (1+\varepsilon)^i\cdot \frac{d \cdot MST}{ t \eps (1+\eps)^i }
 \]
The last term is bounded by
$ \frac{ 4 \cdot d \cdot \log_{1+\varepsilon} W }{ t } MST $ which for $t = \frac{d \log_{1+\eps} W}{\varepsilon }$ gives an
$(1 + \eps)$-multiplicative approximation to $MST$.

The total runtime requires $\log_{1+\varepsilon} W$ iterations to estimate every $c_i$ by $\hat \mu_i$. In every iteration a random shifting is performed and the total number of occupied grid cells are counted using the $\SE$ primitive. Moreover, for the estimation $O(t^2)$ samples are required using Distinct Element Sampling ($\ell_0$-sampling) of the occupied grid cells. Finally for each such sample, a BFS procedure is performed for at most $t$ iterations. The circuit complexity of the conditional sampling queries that are required is negligible in most cases as it is subsumed by the runtime of the corresponding algorithmic primitive used. Only queries performed during the BFS have large circuit size as the circuit requires to keep all grid cells that have been visited. The size in that case is bounded by $O(t \log |\Domain|) = O(t \cdot d \cdot \log n)$. The number of samples is bounded by $\tilde O(\log_{1+\varepsilon} W t^3 ) = \tilde O(d^3 \log^4 n / \eps^7 )$ and the total runtime is bounded by $\tilde O( \log_{1+\varepsilon} W t^4 d \log n) = \tilde O(d^5 \log^6 n / \eps^9)$ if we require constant success probability. Repeating $\log(1/\delta)$ times we can amplify the total success probability to $1-\delta$.
The following theorem shows the dependence of the running time and query complexity on the parameters $n,d,\varepsilon,\delta$: 
\begin{theorem}
It is possible to compute an $(\eps,\delta)$-multiplicative approximation to the weight of the Euclidean minimum spanning tree using $\tilde O(d^3 \log^4 n / \eps^7 )\cdot \log(1/\delta)$ conditional queries in time 
$\tilde O(d^5 \log^6 n / \eps^9)\cdot \log(1/\delta)$.
\end{theorem}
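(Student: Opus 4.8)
The plan is to assemble the reduction, estimator, and error analysis developed above into one algorithm and check that its guarantees and resource usage match the claim. After rescaling $\eps$ by a suitable constant to absorb the $1+O(\eps)$ losses from the CS04 identity~\eqref{MST_approx} and from the grid-rounding step of FIS08, I would, for each scale $i\in\{0,\dots,\log_{1+\eps}W-1\}$, fix a grid of side length $R_i=\eps(1+\eps)^i/\sqrt d$, draw a uniform random shift, and use the $\SE$ primitive (Lemma~\ref{l:primSE}) on the trivial circuit to estimate the number $S$ of occupied cells; repeating over $O(\log(1/\delta'))$ shifts and keeping the one with the fewest occupied cells gives a grid with $S\le 2\bigl(1+\tfrac{d\cdot MST}{\eps(1+\eps)^i}\bigr)$ with probability $1-\delta'$, by Lemmas~\ref{lem:minkowski} and~\ref{lem:minkowskibound} together with Markov's inequality and the Euler-tour argument. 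Here $n$ and $W$ are computable (or $\SE$-estimable) directly.

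On the chosen grid I would run Algorithm~\ref{alg:NumComponents} $m=\Theta(t^2)$ times with $t=\tfrac{d\log_{1+\eps}W}{\eps}=\tilde O(d\log n/\eps^2)$: each run $\ell_0$-samples a uniform occupied cell via $\DES$ (Lemma~\ref{l:primDES}), grows a BFS in $G^{(i)}$ via $\Cond$ queries whose circuit stores the visited set, and reports $\hat S/s$ if the component is exhausted within $t$ cells and $1$ otherwise. The expectation computation above shows $\Exp[\hat c_i]\in[(1-\eps)c_i-S/t,\ (1+\eps)c_i+S/t]$ conditioned on $\SE$ succeeding, and each single estimate lies in $[1,S]$, so Hoeffding over the $m$ runs gives $|\hat\mu_i-\Exp[\hat c_i]|\le S/t$ with probability $1-\delta'$. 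Plugging $\hat\mu_i=(1\pm\eps)c_i\pm O\bigl((1+\tfrac{d\cdot MST}{\eps(1+\eps)^i})/t\bigr)$ into~\eqref{MST_approx}, the accumulated additive error over all $\log_{1+\eps}W$ scales is $O\bigl(\tfrac{d\log_{1+\eps}W}{t}\bigr)\cdot MST=O(\eps)\cdot MST$, which combined with the per-scale $(1\pm\eps)$ factors yields a $(1\pm O(\eps))$-multiplicative approximation; a final $\eps$-rescaling closes correctness. Choosing $\delta'$ polynomially smaller than $\delta$ and union-bounding over the $O(\log_{1+\eps}W)$ scales, the $O(\log_{1+\eps}W\cdot m)$ primitive invocations, and the total-variation slack of all $\Cond$/$\DES$/$\SE$ calls (controlled via Corollary~\ref{cor:prsc}), then amplifying with $\log(1/\delta)$ outer repetitions, drives the overall failure probability below $\delta$.

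For the complexity tally: there are $\log_{1+\eps}W=\tilde O(\log n/\eps)$ scales; the $\SE$ cost per scale is dominated by the rest, while the $\ell_0$-sampling contributes $m=\tilde O(d^2\log^2 n/\eps^4)$ samples per scale, each followed by a BFS of at most $t=\tilde O(d\log n/\eps^2)$ $\Cond$ queries. Multiplying, the conditional-query count is $\tilde O(\log_{1+\eps}W\cdot m\cdot t)=\tilde O(\log_{1+\eps}W\cdot t^3)=\tilde O(d^3\log^4 n/\eps^7)$, times $\log(1/\delta)$. The only circuits of non-negligible size are the BFS ones, of size $O(t\log|\Domain|)=O(t\,d\log n)$, so the runtime is the query count times this, i.e. $\tilde O(\log_{1+\eps}W\cdot t^4\cdot d\log n)=\tilde O(d^5\log^6 n/\eps^9)$, again times $\log(1/\delta)$, matching the statement.

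The step I expect to be the crux is the random-shift argument bounding the occupied-cell count $S$: without it a thin cluster of nearby points could occupy exponentially-many-in-$d$ cells, blowing up both the variance of the $1/\rho$-style estimator and the per-scale additive error $S/t$, so the entire polynomial-in-$d$ claim rests on the Minkowski-sum volume bound of Lemma~\ref{lem:minkowskibound} and on how it interacts with the threshold $t$. The secondary delicate point is the pseudorandomness bookkeeping: $\DES$ and $\SE$ only $\eps$-approximate their ideal behaviour, and one must confirm that these errors, summed over all $\tilde O(d^3\log^4 n/\eps^7)$ oracle calls, stay below $\delta$ — which is what forces the polynomially small error parameters fed into Corollary~\ref{cor:prsc} and is already built into the primitive lemmas.
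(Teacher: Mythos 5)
Your proposal is correct and follows essentially the same route as the paper: the CS04 connected-component formula, the FIS08 grid rounding, the random-shift plus Minkowski-volume bound for the occupied-cell count, the $\ell_0$-sample-then-BFS-with-threshold estimator, and the Hoeffding concentration, with the identical complexity tally $\log_{1+\eps}W\cdot t^3$ queries and $\log_{1+\eps}W\cdot t^4\cdot d\log n$ time. Your identification of the random-shift cell-count bound as the crux, and the pseudorandomness/TV-error bookkeeping as the secondary delicate point, also matches what the paper is implicitly relying on.
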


  \makeatletter{} 

  \makeatletter{} 

\section{Conclusions and Future Directions}
  In this work we introduced a computational model based on conditional sampling and showed how various combinatorial optimization tasks can be performed efficiently with very limited access to the input through the conditional 
oracle. This provides a generic way to design algorithms for several other models such as parallel computation, streaming and distributed.

  In terms of future research it is interesting to explore what other tasks are approachable using this computational model and to understand its powers and limitations. A more concrete question is whether we can avoid the dependence
of the running time on the domain size under a slight variation of our model where the description of the sets are given by arithmetic circuits. This dependence is necessary in our model since specifying a single point in the input
uses at least $\log |\Domain|$ bits.

	\bibliographystyle{alpha}
	\bibliography{ref}

\end{document}